\numberwithin{equation}{section} \setlength{\textwidth} {15cm}
\def\beq{\begin{equation}}
\def\eeq{\end{equation}}
\def\bC{ {{\mathbb{C}}}}
\def\bR{ {{\mathbb{R}}}}
\def\Tr{ {{\rm{Tr}}} }
\newcommand{\pd}{{\rm DP}}
\newcommand{\pk}[1]{p_{\kappa}}
\newcommand{\ol}{\overline{l}}
\newcommand{\ul}{\underline{l}}
\newcommand{\otau}{\overline{\tau}}
\newcommand{\oL}{\overline{L}}
\newcommand{\uL}{\underline{L}}
\newcommand{\on}{\overline{n}}
\newcommand{\un}{\underline{n}}
\newtheorem{defn}{{\bf Definition}}[section]
\newtheorem{thm}[defn]{{\bf Theorem}}
\newtheorem{rem}[defn]{{\bf Remark}}
\newtheorem{notation}[defn]{Notation}
\newenvironment{proof}[1][Proof]{\textbf{#1.} }{\hfill \rule{0.5em}{0.5em}}
\begin{document}

\title{From Loop Quantum Gravity to a Theory of Everything}
\author{Adrian P. C. Lim \\
Email: ppcube@gmail.com
}

\date{}

\maketitle

\begin{abstract}
Witten described how a path integral quantization of Wilson Loop observables will define Jones polynomial type of link invariants, using the Chern-Simons gauge theory in $\mathbb{R}^3$. In this gauge theory, a compact Lie group ${\rm G}$, together with a representation of its Lie Algebra $\mathfrak{g}$, describe the symmetry group and fundamental forces acting on the particles respectively.  However, it appears that this theory might be part of a bigger theory.

We will incorporate this theory into the Einstein-Hilbert theory, which when reformulated and quantized using a ${\rm SU}(2) \times {\rm SU}(2)$ gauge group, gives us a quantized theory of gravity in $\mathbb{R}^4$. In this theory, we can quantize area, volume and curvature into quantum operators. By using both the Chern-Simons and Einstein-Hilbert action, we will write down a path integral expression, and compute the Wilson Loop observable for a time-like hyperlink in $\mathbb{R}^4$, each component loop is coloured with a representation for the Lie Algebra $\mathfrak{g} \times [\mathfrak{su}(2) \times \mathfrak{su}(2)]$, unifying the fundamental forces with gravity. This Wilson Loop observable can be computed using link diagrams, and it can be written as a state model, satisfying a Homfly-type skein relations.

We will show that the Wilson Loop observable remains an eigenstate for the quantum operators corresponding to spin curvature, but it is not an eigenstate for the area and volume quantized operators, unless the representation for $\mathfrak{g}$ is trivial. This implies that in the Planck scale where quantum gravity is important, we see that all the particles are indistinguishable, hence the fundamental forces disappear and only interaction between matter and space-time remains.
\end{abstract}

\hspace{.35cm}{\small {\bf MSC} 2020: 83C45, 81V22, 81T20, 81T45} \\
\indent \hspace{.35cm}{\small {\bf Keywords}: Loop quantum gravity, theory of everything, Chern-Simons, \\
\indent \hspace{2.4cm} Einstein-Hilbert, linking number, area, volume, curvature, \\
\indent \hspace{2.4cm} topological field theory, link invariants}

%{\bf \today}

%%%%%%%%%%%%%%%%%%%%%%%%%%%%%%%%%%%%%%%
%%%%%%%%%%%%%%%%%%%%%%%%%%%%%%%%%
%%%%%%%%%%%%%%%%%%%%%%%%%%%%%%%%%

%\tableofcontents

\section{Introduction}\label{s.pre}

The Standard Model is very successful in unifying the electromagnetic, strong and weak forces, by describing these forces using a ${\rm U}(1) \times {\rm SU}(2) \times {\rm SU}(3)$ gauge theory. The gauge groups ${\rm U}(1)$, ${\rm SU}(2)$ and ${\rm SU}(3)$ describe the electromagnetic, weak and strong force respectively. See \cite{2004Jeff, Lindgren_2021} on the electromagnetic force and \cite{peskin1995} for weak and strong interactions. Although it is not explicitly stated, but a metric on the underlying ambient space $\bR^4$, is actually required to define the Standard Model Lagrangian. See \cite{Buchmuller:2006zu}.

The Standard Model does not include gravitons, the supposed force carrying particles for gravity, yet to be observed in nature. See \cite{Griffiths:111880}. Gravity is weak at energy levels way below the Planck's scale, compared to the other three forces. See \cite{Thiemann:2002nj}. Effects of gravity are neglected, thus one can use the Minkowski metric, or a flat metric. When the gravitational force has to be taken into consideration, the Minkowski metric has to be replaced with a different metric, solved from Einstein's equations. See \cite{schutz1985first}.

Unifying the 3 forces with gravity to obtain a Theorey of Everything has been the holy grail for physicists. Any attempt to unify gravity with the three fundamental forces has met with fierce resistance. One obvious reason is that we have yet to have a widely acceptable quantum theory of gravity. Smolin in \cite{Smolin2006-SMOTCF} argued that a correct quantum theory of gravity, must be independent of any background metric. Because the Standard Model is a theory dependent on a metric, hence a quantum theory of gravity is not compatible with the Standard Model.

As the theory of gravity is essentially a theory on the diffeomorphisms of space-time, a quantum theory of gravity is in fact asking for a quantum theory of space-time. Ashetekar referred it as quantum geometry in \cite{Ashtekar:2004vs}.  On the other hand, we defined quantum geometry, as a theory which describes how closed submanifolds in a four dimensional manifold, are `linked' together, up to time-like and time-ordered equivalence relation as defined in \cite{EH-Lim06}. These equivalence invariants take discrete values and include topological invariants.

There are many potential candidates for a quantum theory of gravity. See \cite{Smolin2006-SMOTCF}. Here, we would like to highlight Loop Quantum Gravity (LQG), which is a theory that is metric independent. See \cite{Thiemann:2002nj}. It can be reformulated as a gauge theory, using ${\rm G}_0 := {\rm SU}(2) \times {\rm SU}(2)$ gauge group, and the main computational object is an Einstein-Hilbert path integral given by Expression \ref{ex.eh.1}. This path integral computes the Wilson Loop observable, given by Equation (\ref{e.weh.1}), which is the average holonomy of $\mathfrak{su}(2) \times \mathfrak{su}(2)$-valued connections, translated over a time-like hyperlink in $\bR^4$. This average is over all possible connections, weighted by the exponential of the Einstein-Hilbert action. See \cite{EH-Lim02}.

The Wilson Loop observable is computed using the hyperlinking number between each component matter loop, with a geometric hyperlink, and we will henceforth term it as quantum Einstein-Hilbert invariant. The hyperlinking number will be defined in Definition \ref{d.c.2}. In the quantization of gravity using the Einstein-Hilbert path integrals, the hyperlinking number between each pair of component matter loops in a hyperlink, do not appear in the Wilson Loop observable. Neither do Homfly-type polynomial invariants show up in LQG. It appears that it might be an incomplete theory.

Physical observables such as area, volume and curvature can be quantized as operators in this theory via an Einstein-Hilbert path integral, computed from invariants in quantum geometry, independent of any metric. These will be summarized in Sections \ref{s.tqt} and \ref{s.av}. This will lend meaning to area, volume and curvature, without specifying a metric. As such, we have a discrete spectrum for all three quantities and we can even speak of a quanta of area or volume. This connects LQG with quantum geometry, justifying it as the quantization of space-time.

The main focus of LQG is to give a quantum theory of gravity, unlike string theory, which aims to unify all fundamental forces into a single theory. LQG itself is not a fundamental theory. See \cite{Rovelli1998}. Ashetekar in \cite{Ashtekar:2004vs} talked about the possibility of unifying quantum gravity with the fundamental forces. The purpose of this article then, is to incorporate gauge theories which govern the strong, weak and electromagnetic forces, into LQG. It is not possible to reformulate LQG, using ${\rm U}(1) \times {\rm SU}(2) \times {\rm SU}(3)$ gauge group. To unify the three forces with gravity, we need to pair it with another theory, that includes $\widetilde{{\rm G}} := {\rm U}(1) \times {\rm SU}(2) \times {\rm SU}(3)$ gauge group, and thus we consider the gauge group $\mathbf{G} = \widetilde{{\rm G}} \times [{\rm SU}(2) \times {\rm SU}(2)]$. But, which action should we use for this gauge group?

We do not use the Yang-Mills gauge theory, because it is metric dependent. As such, Yang-Mills action is not an appropriate choice. In the quantization of gravity, it is important that any observable computed, must be a diffeomorphism invariant. See \cite{Smolin2006-SMOTCF, Thiemann:2007zz}. The gauge theory we choose must also be diffeomorphism invariant. In other words, the field theory we choose for the gauge group $\widetilde{{\rm G}}$, should be a topological quantum field theory.

We propose using the Chern-Simons gauge theory, even though it is a 2+1 dimensional field theory. In the model of anyons, one can consider a Yang-Mills action, together with a Chern-Simons action. The Chern-Simons action will dominate the Yang-Mills action. Furthermore, the Chern-Simons theory emerge as a possible generalization of 3+1 electromagnetic ${\rm U}(1)$ gauge theory to 2+1 dimensions. See \cite{Ademola}. Another example is in the study of quantum hall fluids. One considers the movement of the electrons to be restricted to a plane. For long distances, the Chern-Simons action will dominate over a Maxwell action. See \cite{Zee}. In both models, we can see that  Chern-Simons action is a dominant action and 2+1 dimensional field theories are indeed relevant.

Witten in \cite{MR990772} showed that a Chern-Simons path integral with a ${\rm SU}(N)$ gauge group, will yield the Jones polynomial for a link, up to a physicists' level of rigor. Freed in \cite{Freed}, termed these invariants as quantum Chern-Simons invariants, to be distinguished from the classical Chern-Simons invariants. This coins the term topological quantum field theory in $\bR^3$. In the Chern-Simons gauge theory, there is no restriction on the compact gauge group used.

In \cite{CS-Lim01, CS-Lim02}, we made sense of the Chern-Simons path integral using an Abstract Wiener space formalism, and computed the path integral rigorously. Indeed, the Chern-Simons path integral, also known as Wilson Loop observable given by Equation (\ref{e.w.3}), does give us Homfly-type of polynomial invariants for a link, when the gauge group is ${\rm SU}(N)$. In the quantum Chern-Simons theory, each component knot is linked with each other, giving rise to crossings in a link diagram, each described by a $R$-matrix. When the gauge group is abelian, the quantum Chern-Simons invariants will yield the linking number of the link in $\bR^3$.

Now the Standard Model is a 4-dimensional theory. But yet we chose a 3-dimensional theory Chern-Simons theory. How is this justifiable? The authors in \cite{PhysRevLett.61.1155} discussed how a successful quantization of gravity, should yield link invariants in $\bR^3$, using a path integral. Because the quantum Chern-Simons path integral yield link invariants, it would then seem that using a Chern-Simons action would be most appropriate.

The hyperlinking number is not an invariant under time-like isotopy. As to be explained in Section \ref{s.sum}, there is no non-trivial definition of a linking number between loops in $\bR^4$. In quantum geometry, we need to replace the concept of a time-axis, with time-ordering, which also implies causality. Hence, we have to impose time-ordering between pairs of matter and geometric loops in the tangled hyperlink $\chi(\oL, \uL)$. See \cite{EH-Lim06}. Time-ordering is associated with Hamiltonian constraint in LQG, as explained in greater detail in \cite{EH-Lim07}.

This will eventually lead us to compute the path integrals, from a link diagram. Equation (\ref{e.a.4}) shows that the Wilson Loop observables are defined using linking number of a link in spatial $\bR^3$, projected down from a time-like hyperlink $\chi(\oL, \uL)$. In 3-dimensional topological theory, one of the link invariants we have is the linking number.  Therefore, it is no loss of generality then, to choose a 3-dimensional Quantum Field Theory. In that sense, we see that in both theories, linking numbers between knots, will appear in the calculations, even though the formulation in LQG is in 4-dimensional space-time.

In \cite{Witten:1988hc}, Witten talked about how the Einstein-Hilbert action will reduce to a Chern-Simons action in $2 + 1$ gravity, showing a connection between the Chern-Simons action and the Einstein-Hilbert action in 3-dimension. The final justification for using the Chern-Simons action is in computing the Einstein-Hilbert path integrals. In \cite{EH-Lim02}, we showed how an Einstein-Hilbert path integral can be written in a form of a Chern-Simons path integral, which allowed us to compute the Wilson Loop observables. With all these above reasons, the action we will choose for the gauge group $\widetilde{{\rm G}}$, is the Chern-Simons action, compatible with Einstein-Hilbert action.

In LQG, we can quantize physical observables like area and curvature of a surface, and the volume of a solid region, into operators. We are curious to know if we can use the Chern-Simons path integral to define a quantum operator for some physical observable. It would appear then that the quantum Chern-Simons theory might be contained in a bigger theory. This is where we need gravity to complete this theory. For a complete unified theory, matter must interact with space.

A theory that unites all the fundamental forces, including gravity, should itself, contain a quantum theory of gravity. If LQG is indeed a quantum theory of gravity, then we need to find a theory that can incorporate LQG as a constituent in this theory. By adding Chern-Simons and Einstein-Hilbert actions together, we can extend LQG to include quantum Chern-Simons theory, to form a complete theory, involving $\widetilde{{\rm G}} \times [{\rm SU}(2) \times {\rm SU}(2)]$ gauge group. Just like how one unifies the 3 forces by considering the direct product of gauge groups in the Standard Model Lagrangian, we will do the same by adding ${\rm G}_0:= {\rm SU}(2) \times {\rm SU}(2)$ to the gauge group $\widetilde{{\rm G}}$.

Both theories are topological quantum field theories, invariant under diffeomorphism of spatial $\bR^3$. This is also related to a diffeomorphism constraint imposed in LGQ. Refer to \cite{EH-Lim07}. The common denominator in both theories will be a link in spatial $\bR^3$, which is projected down from a time-like hyperlink in $\bR \times \bR^3$, to be defined in Definition \ref{d.tl.1}. We will define a holonomy of the gauge group $\widetilde{{\rm G}} \times {\rm G}_0$, taken over a hyperlink, describing all the fundamental forces, including gravity, and average it using a path integral expression, which we will also refer it as a Wilson Loop observable.

%The reader will object using the Chern-Simons action, as the action is defined on a 3-dimensional space, whereas the ambient space in Quantum Field Theory or LQG, is a 4-dimensional space. How can we merge a 3-dimensional theory with a 4-dimensional theory? In LQG, we consider a time-like hyperlink, defined in Definition \ref{d.tl.1}. Here, we will project it into 4 links in their respective three dimensional spaces. Each of the link, will be described by its corresponding Chern-Simons theory.

\section{Summary of main results}\label{s.sum}

%In this section, we summarise the definitions, notations and main results, taken from \cite{CS-Lim02}, which the reader should refer to for details.

In the 4-manifold $\bR^4 \cong \bR \times \bR^3$, $\bR$ will be referred to as the time-axis and $\bR^3$ is the spatial 3-dimensional Euclidean space. Fix a coordinate axes for $\bR \times \bR^3$, and let $\{e_a\}_{a=0}^3$ be the standard orthonormal basis for $\bR \times \bR^3$, with $\{e_i\}_{i=1}^3$ being the standard basis in $\bR^3$. And $\Sigma_i$ is the plane in $\bR^3$, containing the origin, whose normal is given by $e_i$.

Pertaining to this standard basis $\{e_a\}_{a=0}^3$, let $\vec{x} =(x_0, x_1, x_2, x_3)$ be the standard coordinates on $\bR^4$,  whereby $x_0$ will be referred to as time. Therefore, $\Sigma_1$ is the $x_2-x_3$ plane, $\Sigma_2$ is the $x_3-x_1$ plane and finally $\Sigma_3$ is the $x_1-x_2$ plane. Note that $\bR \times \Sigma_i \cong \bR^3$ is a 3-dimensional subspace in $\bR^3$ and let $\pi_i: \bR^4 \rightarrow \bR \times \Sigma_i$ denote this projection. Let $\pi_0: \bR \times \bR^3 \rightarrow \bR^3$ denote a projection.

Suppose $E$ is a trivial bundle over $\mathbb{R}^4 \equiv \bR \times \bR^3$, with any compact structure group ${\rm G}:= {\rm G}_1 \times \cdots \times {\rm G}_{\bar{m}}$, each Lie group ${\rm G}_a$ is compact. We will assume that ${\rm G}_1 = {\rm U}(1)$ and ${\rm G}_a$ is non-abelian, $a \neq 1$. This generalizes the gauge group ${\rm U}(1) \times {\rm SU}(2) \times {\rm SU}(3)$ in the Standard model. Let $\mathfrak{g}$ and $\mathfrak{g}_a$ denote the Lie Algebra of ${\rm G}$ and ${\rm G}_a$ respectively.

Let $\mathfrak{g}_0$ be the Lie algebra of ${\rm G}_0 = {\rm SU}(2) \times {\rm SU}(2)$. For each $a=0, 1, \cdots, \bar{m}$, let $q_a$ be a scalar quantity, called the charge pertaining to the gauge group ${\rm G}_a$, with corresponding Lie algebra $\mathfrak{g}_a$.

For a finite set of non-intersecting simple closed curves in $\bR^3$ or in $\bR \times \Sigma_i$, we will refer to it as a link. If it has only one component, then this link will be referred to as a knot. A simple closed curve in $\bR^4$ will be referred to as a loop. A finite set of non-intersecting loops in $\bR^4$ will be referred to as a hyperlink in this article. If we assign an orientation to each component loop, then the hyperlink is said to be oriented.

Two loops are linked together if it is impossible to translate one loop by an arbitrary distance from the other loop without the two objects actually crossing one another. See \cite{greensite2011introduction}. In $\bR^4$, one can topologically deform and `unlink' the loops, without crossing. As such, if we use ambient isotopy as an equivalence relation, then an equivalence class of a set of loops in $\bR^4$ will be a set of `unlinked' simple closed curves. Hence we will only consider a special equivalence class of hyperlinks.

\begin{defn}(Time-like hyperlink)\label{d.tl.1}\\
Let $L$ be a hyperlink. We say it is a time-like hyperlink if given any 2 distinct points $\vec{p}\equiv (x_0, x_1, x_2, x_3), \vec{q}\equiv (y_0, y_1, y_2, y_3) \in L$, $\vec{p} \neq \vec{q}$,
\begin{enumerate}
  \item (T1) $\sum_{i=1}^3(x_i - y_i)^2 > 0$;
  \item (T2) if there exist $i, j$, $i \neq j$ such that $x_i = y_i$ and $x_j = y_j$, then $x_0 - y_0 \neq 0$.
\end{enumerate}
\end{defn}

All our hyperlinks are assumed to be time-like, as given in Definition \ref{d.tl.1}. By its definition, the projection of a time-like hyperlink $L$ using $\pi_a$, $a=0, 1, 2, 3$, will yield a link.

Suppose we have two oriented time-like hyperlinks $\oL = \{\ol^1, \ldots, \ol^{\on}\}$ and $\uL = \{\ul^1, \ldots, \ul^{\un}\}$, which are tangled together, to form an oriented time-like hyperlink $\chi(\oL, \uL)$. The former (latter) will be referred to as a matter (geometric) hyperlink. The oriented time-like hyperlink $\oL$ is assumed to be a framed hyperlink, i.e. this means we define a frame on each $\pi_0(\ol^u)$, which is a normal vector field $v^u =(v_1^u, v_2^u, v_3^u) \in \bR^3$ along $\pi_0(\ol^u)$ that is nowhere tangent to it. Hence, $\pi_0(\oL) \equiv \{\pi_0(\ol^1), \cdots, \pi_0(\ol^{\on})\}$ is a framed link (or ribbon), with nodes assigned to each component knot $\pi_0(\ol^u)$. See \cite{EH-Lim04} for more details. We do not assign any framing to the geometric hyperlink $\uL$.

For each component matter loop $l^u \subset \oL$, we colour it with an irreducible representation $\varrho_u: \mathfrak{g} \rightarrow {\rm End}(W_u)$, and  $\rho_u\equiv (\rho_u^+, \rho_u^-): \mathfrak{su}(2) \times \mathfrak{su}(2) \rightarrow {\rm End}(V_u^+) \times {\rm End}(V_u^-)$. Each representation $\rho_u^\pm$ is assumed to be irreducible, hence the dimension of $V_u^\pm$ is $2j_u^\pm + 1$, whereby $j_u^\pm$ is an integer or half-integer. Refer to Remark \ref{r.l.3}. Define $|j| := \sum_{u=1}^{\on} |j_u^+| + |j_u^-|$. Each $j_u^\pm$ will define a quadratic Casimir operator, proportional to identity. This proportionality constant is given by $j_u^\pm(j_u^\pm+1)$, and will be interpreted as energy. The higher the value of $j_u^\pm$, the higher is the energy level.

We do not colour the geometric component loop with a representation. We will term $\chi(\oL, \uL)$ as a coloured time-like hyperlink. For each component matter loop $\ol^u$, we time-ordered it with every component geometric matter loop $\ul^v$, and consider $\chi(\oL, \uL)$, up to time-like isotopy and time-ordered equivalence relation, as defined in \cite{EH-Lim06}. Time-like isotopy and time-ordered equivalence relation are associated with the diffeomorphism constraint and Hamiltonian constraint respectively. Refer to \cite{EH-Lim07}.

In Section \ref{s.uni}, we will explain how we can define quantum Chern-Simons invariants on a time-like hyperlink $\oL \subset \bR^4$. Suppose we have a sequence of coloured time-like matter hyperlinks $\{ \oL^n \}_{n=1}^\infty$, all assumed to be inequivalent time-like hyperlinks.  Further assume that we can distinguish them using quantum Chern-Simons invariants. Refer to Item \ref{r.l.1c} in Remark \ref{r.l.1}.

For each $k \in \mathbb{N}$, $\oL^k = \{\ol^{k, 1}, \cdots, \ol^{k, \on_k}\}$, each matter loop $\ol^{k, u}$ is coloured by some representation $(\varrho_u^k, \rho_u^k)$. Suppose for each $\oL^k$, there is a corresponding time-like geometric hyperlink $\uL^k = \{\ul^{k, 1}, \cdots, \ul^{k, \un_k}\}$, $\on_k, \un_k \in \mathbb{N}$. These two hyperlinks are tangled together to form a coloured time-like hyperlink $\chi(\oL^k, \uL^k)$, for each $k \in \mathbb{N}$. Construct the sequence $\{ \chi(\oL^k, \uL^k) \}_{k=1}^\infty$.

Using a non-trivial representation $\{\rho_u\}_{u=1}^{\on}$, a quantum Einstein-Hilbert invariant given by Equation (\ref{e.weh.1}), only takes into account the hyperlinking number between each matter loop $\ol^u$, $u=1, \ldots, \on$, and $\uL$. It can only show that each component matter loop is `linked' with a geometric hyperlink; it may not be able to differentiate $\oL^n$ from $\oL^m$, $m \neq n$, in the sequence $\{\oL^n \}_{n=1}^\infty$. Since the quantum Einstein-Hilbert invariant does not contain any Homfly-type of polynomial invariant, we will then ask if it is possible to include the Chern-Simons action and write down a unified path integral expression, that includes a quantum Chern-Simons invariant.

The answer is yes. In Theorem \ref{t.u.1}, we will compute the average of the holonomy of a $\mathfrak{g} \times \mathfrak{g}_0$-valued connection, over a hyperlink $\chi(\oL, \uL)$. This average is given by a path integral Expression \ref{ex.toe.1}, using both the Chern-Simons and Einstein-Hilbert action. This average of the holonomy over $\chi(\oL, \uL)$, termed as Wilson Loop observable, can be computed and is expressed as a state model.  Indeed, it is a link invariant containing quantum Chern-Simons invariants, which satisfy the Homfly type of skein relations, under certain quantization conditions. Refer also to Remark \ref{r.l.1}.

When we consider the fundamental representation of ${\rm SU}(N)$, $N > 1$, the charge $q_a$ being an integer or half-integer, is both necessary and sufficient for the quantum Chern-Simon invariants to satisfy the Yang-Baxter equation. Hence, it cannot be made small arbitrary. When $q_a^2$ is an even integer, we see that the quantum Chern-Simons invariant just gives us the linking number. Refer to Remark \ref{r.r.1}. On the other hand, there is no restriction on $q_0$ and $q_1$, whereby $q_0$ and $q_1$ are the charges corresponding to ${\rm G}_0$ and ${\rm U}(1)$ gauge groups respectively. In the case when the gauge group is abelian, the quantum Chern-Simons invariants essentially give us the linking number. Thus, a non-abelian quantum Chern-Simons invariant, is a stronger `link invariant' than one using an abelian gauge group.

When the fundamental forces are represented by non-trivial representations \\
$\{\varrho_u\}_{u=1}^{\on}$, the Wilson Loop observable given by Equation (\ref{e.z.2}) in Theorem \ref{t.u.1}, contains both the quantum Chern-Simons invariants and the Einstein-Hilbert invariants.  The quantum Chern-Simons invariants capture how matter interact with itself; the quantum Einstein-Hilbert invariants capture how matter interact with space. The former, is able to distinguish the hyperlinks in the sequence $\{ \chi(\oL^k, \uL^k) \}_{k=1}^\infty$ by our assumption.

The quantum Chern-Simons invariants are different from the Einstein-Hilbert invariants. After imposing time-ordering, the quantum Einstein-Hilbert invariants just give us the linking number, between a projected matter loop and a projected geometric hyperlink, in spatial $\bR^3$. See Remark \ref{r.l.2}. Under our earlier assumption, each inequivalent time-like hyperlink in the sequence $\{ \oL^n \}_{n=1}^\infty$ is distinguishable by a quantum Chern-Simons invariant. If we boldly assume that nature does not require two invariants to distinguish two inequivalent time-like hyperlinks, then $q_0$ has to be made small, compared to $q_a$, $a = 1, \cdots, \bar{m}$. This might give a plausible reason why gravity is weak. This is also analogous to how in the model of anyons or quantum hall fluids, the Chern-Simons action dominates over the Yang-Mills action.

We can generalize Theorem \ref{t.u.1} in Section \ref{s.uni}. Consider an oriented triple \\
$\{S, \emptyset, \chi(\oL, \uL)\}$ as described in Theorem 3 of \cite{EH-Lim06}, up to an equivalence relation. In \cite{EH-Lim05}, we quantized spin curvature, taken over a closed surface $S$ in $\bR^4$, and showed that it can be computed using the linking number between $S$ and $\uL$. This is an invariant up to ambient isotopy. We will prove that a path integral given by Expression \ref{ex.toe.3}, which generalizes Expression \ref{ex.toe.1}, shows that the Wilson Loop observable given by Equation (\ref{e.z.2}), is an eigenfunctional for a quantized spin curvature $\hat{F}_S$. This is the content of Theorem \ref{t.u.3}. This eigenvalue, can be calculated from the linking number between $S$ and $\uL$, given by Equation (\ref{e.toe.4}). It requires the global topology of both $\chi(\oL, \uL)$ and the surface $S$, hence it is not a local invariant.

The Principle of Equivalence says that when gravitational effects are observed, it is impossible, by any experiment, to determine the type of gravitational field responsible for it. When the quantized curvature of a surface $S$ is non-zero, we know the presence of a geometric hyperlink, which represents gravitons. See \cite{EH-Lim07}.

Recall we had a sequence $\{\chi(\oL^n, \uL^n)\}_{n=1}^\infty$. Choose a closed surface $S$ and decide how each geometric hyperlink $\uL^n$ is `linked' with $S$. Construct a sequence of oriented triple $\{S, \emptyset, \chi(\oL^n, \uL^n)\}_{n=1}^\infty$, for the fixed closed surface $S$. Because of the Principle of Equivalence, the eigenvalues of the quantized curvature must be degenerate. In other words, knowing the linking number between $S$ and a geometric hyperlink, is insufficient to tell us which hyperlink $\chi(\oL^n,\uL^n)$ in the sequence corresponds to the said eigenvalue. For details, see \cite{EH-Lim07}. But, looking at Equation (\ref{e.toe.4}), we can still determine the time-like hyperlink $\chi(\oL, \uL)$, solely from the quantum Chern-Simons invariants, when the representation of $\mathfrak{g}$ is non-trivial. This does not contradict the Equivalence Principle.

We will move on to quantize area of a compact surface $S$, and volume of a compact solid region $R$ in $\bR^3$, details to be given in Section \ref{s.av}. When we quantized area and volume in LQG, we can consider a quanta of area or volume in quantum geometry. This happens when we go to Planck's scale of distance. In fact, to define the area and volume path integrals, it is necessary to partition the surface $S$ and region $R$. This further shows that the LQG is a local theory. Refer to \cite{EH-Lim03, EH-Lim04, EH-Lim07} for details.

\begin{rem}
\begin{enumerate}
  \item In the quantization of volume, we need to consider nodes on $\pi_0(\oL)$, which are 0-dimensional. This shows that LQG is a local theory, in the sense defined in \cite{Freed}.
  \item In the absence of a metric on $\bR^4$, it does not make sense to talk about `closeness' or `local'. When we say it is local, we mean that we are able to compute a quanta of area or volume.
\end{enumerate}
\end{rem}

Consider an oriented triple $\{S, R, \chi(\oL, \uL)\}$, up to an equivalence relation, as described in Theorem 3 of \cite{EH-Lim06}. We will show that using path integral Expressions \ref{ex.as.1}, \ref{ex.as.2} and \ref{ex.vs.1} respectively, the Wilson Loop observable given by Equation (\ref{e.z.2}), is not an eigenfunctional of the area operator $A_S$, or volume operator $V_R$. See Theorems \ref{t.u.4} and \ref{t.u.2}.

The Wilson Loop observable given by Equation (\ref{e.z.2}), will be an eigenfunctional, provided we choose the representation of $\mathfrak{g}$ to be trivial.  The implication of this will be that we will no longer be able to identify the matter hyperlink $\oL$ using the quantum Chern-Simons invariants. The matter hyperlink representing particles, no longer transform under the gauge group, which happens when area or volume is quantized.

When the representations $\{\ol^u,\varrho_u\}_{u=1}^{\on}$ for a coloured time-like hyperlink $\oL$ becomes trivial, we see that Equation (\ref{e.z.2}) reduces to $Z(q_0/3; \{\ol^u, \rho_u\}_{u=1}^{\on}, \uL)^3$. Since we assume $q_0$ to be small, using Taylor's expansion, we see that the quantum Einstein-Hilbert invariant in Equation (\ref{e.n.1}) is \beq Z(q_0; \{\ol^u, \rho_u\}_{u=1}^{\on}, \uL) = \sum_{u=1}^{\on} (2j_u^+ + 1) + (2j_u^- + 1) + O\left( q_0^2 |j|^3 \right). \nonumber \eeq When $|j|$ is small, i.e. at low energies, it is very hard to detect the hyperlinking number. That explains why it is so difficult to detect quantum gravity.

Now consider the quantized curvature operator, given by Equation (\ref{e.ehc.2}). Using Taylor expansion, it is written as \beq {\rm lk}(S, \uL) \left[\sum_{u=1}^{\on} (2j_u^+ + 1) + (2j_u^- + 1) + O\left( q_0^2 |j|^3 \right) \right]\otimes \mathcal{E}. \nonumber \eeq Due to Principle of Equivalence, knowing the linking number ${\rm lk}(S, \uL)$ is insufficient to determine the corresponding hyperlink $\chi(\oL, \uL)$. To identify the hyperlink, it is necessary to determine the hyperlinking number, given by the coefficient of $q_0^2|j|^3$ term.

When $q_0|j|^2$ is small, by doing a Taylor expansion, the area path integral given by Equations (\ref{e.a.3a})-(\ref{e.a.3b}), are of the order $q_0|j|^2$. Thus it is easier to detect a quanta of area, rather than the hyperlinking number, which appears as coefficient of $q_0^3|j|^5$. The term corresponding to $q_0|j|^2$ becomes significant when $|j|$ is large, or at high energies.

In a similar manner, the volume path integral given by Equation (\ref{e.v.1}), is of the order $q_0^2|j|^3$. As such, it is easier to detect a quanta of area, then a quanta of volume. The hyperlinking number, will be difficult to detect as it is the coefficient of term $q_0^3|j|^5$ in the area path integral expression.

Recall we had a sequence $\{\chi(\oL^n, \uL^n)\}_{n=1}^\infty$. Fix a compact surface $S \subset \bR^3$ and a solid region $R \subset \bR^3$, and decide how each $\pi_0(\oL^n)$ `pierce' the surface $S$ and how many of the nodes in $\pi_0(\oL^n)$ lie inside $R$, as described in \cite{EH-Lim06}. Construct a sequence of oriented triple $\mathbb{T} := \{S, R, \chi(\oL^n, \uL^n)\}_{n=1}^\infty$.

When one obtains a non-zero value from the area or volume operator, it is enough to conclude that matter hyperlinks are present. Moreover, the authors in \cite{Rovelli:1995ac} claimed that volume and area operators are sufficient to distinguish all the time-like hyperlinks in the sequence from each other. This means, in the construction of the sequence in the preceding paragraph, we require that the area and volume eigenvalues, are non-degenerate. Thus, it is possible to identify the corresponding hyperlink $\chi(\oL^n, \uL^n)$ from the sequence, using the eigenvalues of the quantized area and volume operators, which also give us some information on the representations $\{\rho_u^n\}_{u=1}^{\on}$. Refer to \cite{EH-Lim07}.

To summarize, it is only at short distances, that gravity plays an important role. The preference for certain type of quantum invariants might explain why gravity is very weak compared to the other forces, i.e. $q_0$ is small. To be able to identify the corresponding time-like hyperlink in the sequence, it is necessary to compute the coefficients of $q_0|j|^2$ or $q_0^2|j|^3$ terms, possible only when $j$ is large or at high energies.

\section{Chern-Simons action}\label{s.cs}

Ever since Witten published his paper \cite{MR990772}, several authors have attempted to make rigorous sense of the Chern-Simons path integral and prove that it defines a Homfly type of polynomial for a link. Here, we will summarise the results obtained from our previous work on defining Chern-Simons path integrals.

Since ${\rm G}$ is compact, we can assume $\mathfrak{g}$ is a Lie subalgebra of the Lie algebra $\mathfrak{u}(N)$ of ${\rm U}(N)$, for some $N \in \mathbb{N}$. Define an inner product $\langle A, B \rangle := -\Tr[AB]$, $\Tr$ is the usual matrix trace. Fix an orthonormal basis $\{E_a^\alpha: a=1, \cdots, \bar{m}, \alpha=1, \cdots, N_a\} \subset \mathfrak{g}$ throughout this article.

Let $x=(x_1, x_2, x_3)$ be the standard coordinates on $\mathbb{R}^3$, pertaining to some orthonormal basis $\{e_a\}_{a=1}^3$ fixed in $\bR^3$. In \cite{CS-Lim01, CS-Lim02}, we described how to quantize the Chern-Simons gauge theory for both abelian and non-abelian gauge group $G$ respectively.

The Chern-Simons action is given by \beq S_{{\rm CS}}(A) = \frac{\varsigma}{4\pi} \int_{\mathbb{R}^3} \Tr \left[A \wedge dA + \frac{2}{3}A \wedge A \wedge A\right],\ \varsigma \neq 0, \label{e.cs.2} \eeq $A$ is a $\mathfrak{g}$-valued connection on $\mathbb{R}^3$, which decays to 0 fast enough for it to be integrable on $\mathbb{R}^3$.

Now, we want to make sense of an expression \beq Z_{{\rm CS}} := \int_{ A \in \mathscr{A}} e^{i S_{\rm CS}(A)} DA, \label{e.cs.5} \eeq whereby $DA$ is a non-existent Lebesgue type of measure, $\mathscr{A}$ is the space of connections, modulo gauge transformations, and $i = \sqrt{-1}$.

\begin{notation}
Using standard coordinates $\vec{x} = (x_0, x_1, x_2, x_3) = (x_0, x)$ on $\bR^4$, let $\Lambda^1(\bR^4)$ be the vector space spanned by $\{dx_0, dx_1, dx_2, dx_3\}$, with subspace $\Lambda^1(\bR^3)$ spanned by $\{dx_1, dx_2, dx_3\}$.

For $\kappa > 0$, let $\overline{\mathcal{S}}_\kappa(\bR^3) \subset L^2(\bR^3)$ be a Schwartz space, which consists of functions of the form $f = p \cdot \sqrt{\psi_\kappa}$, whereby $p$ is a polynomial on $\bR^3$, and $\psi_\kappa(x) = \frac{\kappa^3}{(2\pi)^{3/2}}e^{-\kappa^2|x|^2/2}$ is the Gaussian function with variance $1/\kappa^2$.
\end{notation}

Fix $a \in \{1, \cdots, \bar{m}\}$. Every $\mathfrak{g}_a$-valued connection over $\mathbb{R}^3$ can be gauge transformed into the form (sum over repeated index $\alpha=1, \cdots, N_a$, for each $a$) \beq A_a = A_\alpha^{a,1} \otimes dx_1\otimes E_a^\alpha + A_\alpha^{a,2} \otimes dx_2\otimes E_a^\alpha, \nonumber \eeq where $A_\alpha^{a,1}, A_\alpha^{a,2}$ are real-valued functions on $\mathbb{R}^3$ and \beq A_\alpha^{a,2}(x_0,x_1,0) = 0,\ A_\alpha^{a,1}(x_0, 0,0) = 0. \label{e.c.1} \eeq This is called Axial gauge fixing in \cite{CS-Lim01}. Assume $A_\alpha^{a,1}$ and $A_\alpha^{a,2}$ are smooth and decay fast enough for the integral to be defined. We will also drop the restrictions in Equation (\ref{e.c.1}). Hence, we will write $A_a = \sum_{p=1}^2 A_\alpha^{a,p} \otimes dx_p \otimes E_a^\alpha$, summing over index $\alpha$. With this gauge the cubic term in the action drops out.

Define for $p=1,2$,
\begin{align}
K_p^{\kappa} :=& \left\{ \sum_{a=1}^{\bar{m}} A_\alpha^{a,p} \otimes dx_p \otimes E_a^\alpha:\ A_\alpha^{a,p} \in \overline{\mathcal{S}}_\kappa(\bR^3) \right\}.\label{e.o.2}
\end{align}
Thus, our connection $A= \sum_{a=1}^{\bar{m}}A_a$ is in $K_1^{\kappa} \oplus K_2^{\kappa} \subset \overline{\mathcal{S}}_\kappa(\bR^3) \otimes \Lambda^1(\bR^3) \otimes \mathfrak{g}$. Note its dependence on the parameter $\kappa$.

The Chern-Simons action will be used to describe the forces, for example the weak, strong and electromagnetic forces. The bosons mediating the force $F_a$, will be described by the generators of the Lie algebra $\mathfrak{g}_a$, represented as an endomorphism on some vector space. Refer to \cite{GUT}.

\begin{defn}\label{d.t.1}(Time ordering operator)\\
Suppose a matrix $A(s)$ is indexed by $s \in \bR$. For any permutation $\sigma \in S_r$, \beq \mathcal{T}(A(s_{\sigma(1)})\cdots A(s_{\sigma(r)})) = A(s_1)\cdots A(s_r),\ s_1 > s_2 > \ldots > s_r. \nonumber \eeq
%Suppose now our matrices $A^k(s)$ are indexed by the closed curves $k$ and time $s$. Extend the definition of the time ordering operator, first ordering in decreasing values of $k$, followed by the time $s$.
\end{defn}

Suppose we have a link $L = \{ l^u:\ u=1, \ldots, \on\} \subset \bR^3$. For each knot $l^u$, we colour it with an irreducible representation $\varrho_u: \mathfrak{g} \rightarrow {\rm End}(W_u)$.

Of great interest is the evaluation of
\begin{align}
Z(q;& \{l^u, \varrho_u\}_{u=1}^{\on}) \nonumber \\
&:= \lim_{\kappa \rightarrow \infty}\frac{1}{Z_{{\rm CS}}^\kappa}\int_{A= \sum_{a=1}^{\bar{m}}A_a \in K_1^\kappa \oplus K_2^\kappa} W(q; \{l^u, \varrho_u\}_{u=1}^{\on})(A) e^{iS_{\rm CS}(A)} D[A], \label{e.w.2}
\end{align}
where (sum over $p=1,2$ and $\alpha=1, \cdots, N_a$)
\beq
W(q; \{l^u, \varrho_u\}_{u=1}^{\on})(A) := \prod_{u=1}^{\on}\Tr_{\varrho_u}  \mathcal{T} \exp\left[ \sum_{a=1}^{\bar{m}}q_a\int_{l^u} A^{a,p}_{\alpha} \otimes dx_p\otimes E_a^{\alpha}  \right] \label{e.b.1} \eeq and \beq Z_{{\rm CS}}^\kappa =
\int_{A \in K_1^\kappa \oplus K_2^\kappa}  e^{iS_{\rm CS}(A)} DA, \nonumber \eeq for a non-existent Lebesgue type of measure $D[A]$.

Here, $\Tr_{\varrho_u}$ is the matrix trace in the representation $\varrho_u$ for $\mathfrak{g}\equiv \mathfrak{g}_1 \times \mathfrak{g}_2 \times \cdots \times \mathfrak{g}_{\bar{m}}$ and $\mathcal{T}$ is the time ordering operator, defined in Definition \ref{d.t.1}. The integral in Equation (\ref{e.w.2}) will be known as the Wilson Loop observable (associated to the link $L$). The vector $q = (q_1, \cdots, q_{\bar{m}})$ will be called the charge of the link. Note that
$\Tr_{\varrho_u}  \mathcal{T} \exp\left[ \sum_{a=1}^{\bar{m}}q_a\int_{l^u} A_a  \right]$ is the holonomy of $\sum_{a=1}^{\bar{m}}q_a\varrho_u(A_a)$, along the knot $l^u$.

In \cite{CS-Lim01}, we showed how to make sense of the Chern-Simons path integral in Equation (\ref{e.w.2}) using Wiener measure, for each $\kappa$. Its limit as $\kappa \rightarrow \infty$, can be computed from a link diagram in $\Sigma_3$, proved in \cite{CS-Lim02}. To explain the result, we need to digress a bit and discuss link diagrams. On each link diagram, we have crossings, which we also refer to as double points in \cite{CS-Lim02}. We can also define a directed graph from this link diagram.

\section{Link diagrams}\label{s.ld}

\begin{defn}\label{d.c.1}(Edges)\\
Fix $i=1, 2, 3$. An oriented link $L = \{l^u\}_{u=1}^{\on} \subset \bR^3$ is projected on $\Sigma_i \cong \mathbb{R}^2$ forming a planar directed graph as follows. Let $C^u: [0,1] \rightarrow \bR^3$ be a parametrization of $l^u$, consistent with the orientation.
\begin{enumerate}
  \item The vertex set $V(\Sigma_i; L)$ will be identified with the set of double points, more commonly known as crossings, on a link diagram in $\Sigma_i$, denoted as $\pd(\Sigma_i; L)$. The set of edges $E(\Sigma_i; L)$ is simply the set of lines in the planar diagram of $L$ joining each vertex. Each edge $e: [\epsilon_1,\epsilon_2] \rightarrow \mathbb{R}^2$, $0 \leq \epsilon_1 < \epsilon_2 \leq 1$. The end points $e(\epsilon_1), e(\epsilon_2)$ will be a vertex in $V(\Sigma_i; L)$. Each vertex has 4 edges incident onto it.
  \item Fix a $u$. For each knot $l^u$ and $l^v$, $v \neq u$, we let $\pd(\Sigma_i; l^u, l^v)$ denote all the crossings on the link diagram in the plane $\Sigma_i$, each such crossing is formed from projecting an arc in $l^u \subset L$ and another arc in $l^v \subset L$ onto $\Sigma_i$. For each crossing $p$ in $\pd(\Sigma_i; l^u, l^v)$, $u \neq v$, let $V_u(\Sigma_i; l^v)$ be the set of vertices corresponding to such points $p$.
  \item Suppose $p \in \pd(\Sigma_i; l^u)$, a set of crossings formed using only arcs in $l^u$. Let $V_u(\Sigma_i; l^u)$ be the set of vertices corresponding to such $p$'s on a planar diagram of $l^u$.
  \item\label{d.c.1.7}  Define $V(\Sigma_i; l^u) = \bigcup_{v=1}^{\on} V_u(\Sigma_i; l^v)$, which defines the vertex set of the graph $l^u$. The set of edges $E(\Sigma_i; l^u)$, is a subset of $E(\Sigma_i; L)$, joining only vertices in $V(\Sigma_i; l^u)$. Note that $E(\Sigma_i; L) = \bigcup_{v=1}^{\on} E(\Sigma_i; l^v)$.
  \item Suppose $e$ and $\hat{e}$ belong to $E(\Sigma_i;l^u)$. We say that an edge $e: [\epsilon_1,\epsilon_2] \rightarrow \mathbb{R}^2$ precedes another edge $\hat{e}:[\hat{\epsilon}_1,\hat{\epsilon}_2] \rightarrow \mathbb{R}^2$ if $\epsilon_2 \leq \hat{\epsilon}_1$.
  \item Each crossing $p \in \pd(\Sigma_i; L)$ will be denoted by 4 edges, labeled by \newline
  $(e^+(p), e^-(p), \bar{e}^+(p), \bar{e}^-(p))$, whereby $e^+$ and $e^-$ are edges belonging to $E(\Sigma_i; l^u)$ with the bigger index $u$ and $e^-(\bar{e}^-)$ is the edge that precedes $e^+(\bar{e}^+)$ at the vertex $p$. When all 4 edges belong to the same curve, then $\bar{e}^+$ and $\bar{e}^-$ are the edges that precede $e^+$ and $e^-$ respectively.
  \item\label{i.f.1} Now suppose we define a frame on $L$ and project the framed oriented link onto $\Sigma_i \cong \mathbb{R}^2$, as described in \cite{CS-Lim02}. Using this frame $v^u$, which is a nowhere tangent vector field along $l^u$, we can define a `duplicate' copy of $l^u$, denoted by $l^{u,\epsilon}:= l^u + \epsilon v^u$, $\epsilon > 0$ small.
    Thus, $l^u$ and $l^{u,\epsilon}$, will together define a ribbon, which gives rise to half-twists and hence nodes in $l^u$.

    For each component framed knot $l^u$, the crossings in the planar diagram formed from $l^u$ and its copy $l^{u, \epsilon}$, will define a set of twisted double points, more commonly known as half twists, denoted as ${\rm TDP}(\Sigma_i;l^u)$. Such a set of half twists will define a set of vertices as in the case of an oriented link. A half twist $q$ will then be represented by a vertex, also referred to as node, with only 2 edges incident onto it, labeled $(e^+(q), e^-(q))$. Thus, a full twist, given by 2 consecutive half twists, twisted in the same direction, will be represented in the planar graph of the curve $l^u$ by 2 vertices, joined together by a common edge. Denote ${\rm TDP}(\Sigma_i;L) = \bigcup_{u=1}^{\on}{\rm TDP}(\Sigma_i;l^u)$.
\end{enumerate}
\end{defn}

For each crossing $p$, we can define its orientation and height, hence define its algebraic number $\varepsilon(p)$. A twisted double point can also be assigned an algebraic number. Refer to \cite{CS-Lim02}. Given two distinct oriented knots $l^u$ and $l^v$, its linking number is given by \beq {\rm lk}(l^u, l^v) = \sum_{p \in \pd(\Sigma_k; l^u, l^v)} \varepsilon(p), \nonumber \eeq and is independent of any $k=1, 2, 3$.

\begin{rem}
Refer to Section \ref{s.sum}. Given a time-like hyperlink $\chi(\oL, \uL) \subset \bR^4$, we can project it onto $\bR^3$ using $\pi_0$ to form a link. Using ambient isotopy in $\bR^3$ if necessary, both $\pi_0(\oL)$ and $\pi_0(\uL)$ can be projected on all three planes $\Sigma_k$, $k=1,2,3$, simultaneously to form respective link diagrams $D_k \subset \Sigma_k$.
%But, in this article, we will need to consider the projections of $\pi_i(\oL)$ and $\pi_i(\uL)$ together, $i=1, 2, 3$, to form a link in $\bR \times \Sigma_i$.

%We will always project $\pi_i(\oL)$ down onto $\Sigma_i$ to form a link diagram $\tilde{D}_i$, which defines a directed graph, as described in Definition \ref{d.c.1}. The vertex set and the set of edges for this directed graph will be the same as the vertex set of the directed graph defined by $D_i$. We will write the set of edges for $\tilde{D}_i$ to be $E(\pi_i(\oL))$, $i=1,2,3$. The only difference is the algebraic crossing at each vertex or crossing, which may not coincide.
\end{rem}

\begin{notation}\label{n.q.4}
Given a time-like hyperlink $\chi(\oL, \uL)$, project it into $\bR^3$, to form a link. Further project $\pi_0(\chi(\oL, \uL))$ onto $\Sigma_i$ to form link diagram $D_i$ on $\Sigma_i$ respectively, hence each defining a directed graph, as described in Definition \ref{d.c.1}. For each $k=1,2,3$, let $\pd(\Sigma_k; \pi_0(\oL))$ denote the subset of crossings in $D_k$, each such crossing formed using arcs strictly in $\pi_0(\oL)$.

Fix a $u = 1, \cdots, \on$. From $D_k$, $k=1, 2, 3$, define
\begin{align*}
\pd(\Sigma_k; \pi_0(\ol^u), \pi_0(\uL)) &:= \bigcup_{v=1}^{\un}\pd(\Sigma_k; \pi_0(\ol^u), \pi_0(\ul^v)).
\end{align*}
Each vertex corresponding to a crossing in $\pd(\Sigma_k; \pi_0(\ol^u), \pi_0(\uL))$, lies in the directed subgraph in $\Sigma_k$, defined from the link diagram containing only $\pi_0(\ol^u)$ as described in Item \ref{d.c.1.7}. Similar to half-twists as described in Item \ref{i.f.1}, each such vertex $\tilde{q}$, is a vertex of valence 2, and will be labelled with 2 edges $(e^+(\tilde{q}), e^-(\tilde{q}))$ incident to it, in the respective subgraph.
\end{notation}

\begin{defn}\label{d.c.2}
Consider a time-like hyperlink $L \subset \bR \times \bR^3$, which we project it down onto $\bR^3$ using $\pi_0$ to form a link $\pi_0(L)$. This link can be projected down onto $\Sigma_k$ to form a link diagram $D_k$.

For a crossing $p \in D_k$, $k=1,2,3$, we defined a time lag of $p$ as was done in \cite{EH-Lim06}, which we will denote by $\tau(p) \in \{-1, +1\}$. Note that by the definition of a time-like hyperlink, each crossing has a well-defined time-lag. Hence define $\sigma(p) := \tau(p) \varepsilon(p)$.
\end{defn}

For each $\ol^u$, $u=1, \cdots, \on$, define the hyperlinking number between $\ol^u$ with $\uL$ as
\begin{align}
{\rm sk}(\ol^u, \uL):=&
\sum_{k=1}^3\sum_{p \in \pd(\Sigma_k; \pi_0(\ol^u), \pi_0(\uL))}\sigma(p). \label{e.sk.1}
%\equiv& \sum_{v=1}^{\un}\sum_{\tilde{q} \in \pd(\pi_0(\ol^u), \pi_0(\ul^v))^\ast}\sigma(\tilde{q}).
\end{align}

\begin{rem}\label{r.l.4}
Note that the hyperlinking number, is not a time-like isotopic invariant. We need to time-order each pair of component matter loop and component geometric loop. See \cite{EH-Lim06}. After time-ordering, the hyperlinking number between $\ol^u$ and $\ul^v$ will be ${\rm sk}(\ol^u, \ul^v) = \pm 3\ {\rm lk}(\pi_0(\ol^u), \pi_0(\ul^v))$, the $\pm$ sign depends on how the two loops $\ol^u$ and $\ul^v$ are time-ordered. The factor 3 is due to projecting $\pi_0(\ol^u)$ and $\pi_0(\ul^v)$ simultaneously on $\Sigma_i$, $i=1,2,3$.
\end{rem}

\begin{defn}\label{d.c.3}
Define $[F_{ac}]_{ij} := \delta_{ia}\delta_{jc}$, $\delta_{kl} = 1$ if $k=l$, 0 otherwise. Let $N$, $\bar{N}$ be positive integers. For $A \in M_N(\mathbb{C}) \otimes M_{\bar{N}}(\mathbb{C})$, the components are given by $[A]^{ab}_{cd}$ with respect to the basis $\{F_{ac}\otimes F_{bd}\}_{a,b,c,d}$ of $M_N(\mathbb{C}) \otimes M_{\bar{N}}(\mathbb{C})$.

In other words, given $A \otimes B \in M_N(\mathbb{C}) \otimes M_{\bar{N}}(\mathbb{C})$, the components $[A \otimes B]^{ab}_{cd} = A^a_c \otimes B^b_d \equiv A_{ac} \otimes B_{bd}$. The indices $a, b$ track the rows, the lower indices $c, d$ track the columns.
\end{defn}

\begin{notation}\label{n.q.2}
For $u=1, \ldots, \on$, let $n_u$ be the dimension of each representation $\varrho_u: \mathfrak{g} \rightarrow {\rm End}(\bC^{n_u})$.

Given a link diagram of a link $L \subset \bR^3$ in $\Sigma_i$, a crossing $p \in \pd(\Sigma_i;L)$ is labelled with 4 edges; a half-twist $p \in  {\rm TDP}(\Sigma_i;L)$ is labelled with 2 edges.

If a crossing $p \equiv (e^+,e^-, \bar{e}^+, \bar{e}^-)$, with $\{e^+,e^-\}\subseteq E(\Sigma_i; \pi_0(\ol^u))$ and $\{\bar{e}^+,\bar{e}^-\}\subseteq E(\Sigma_i; \pi_0(\ol^v))$, then define a $R$-matrix \beq R(p) :=  \exp \left[-\varepsilon(p)\sum_{a=1}^{\bar{m}}\pi iq_a^2 \sum_{\alpha=1}^{N_a} \varrho_u(E_a^\alpha) \otimes \varrho_v(E_a^{\alpha})  \right]\in M_{n_u}(\mathbb{C}) \otimes M_{n_v}(\mathbb{C}) .\nonumber \eeq

For $A,B \in M_n(\mathbb{C})$, $A \cdot B$ means take the usual matrix multiplication. If $p \in  {\rm TDP}(\Sigma_i;\pi_0(\oL))$, with $p \equiv (e^+, e^-) \subseteq E(\Sigma_i;\pi_0(\ol^u))$, then define \beq T(p) := \exp \left[-\varepsilon(p)\sum_{a=1}^{\bar{m}}\frac{\pi iq_a^2}{2} \sum_{\alpha =1}^{N_a} \varrho_u(E_a^\alpha) \cdot \varrho_u(E_a^{\alpha})  \right] \in M_{n_u}(\mathbb{C}) .\nonumber \eeq
\end{notation}

In \cite{CS-Lim02}, we showed that for a link $L \equiv \{l^u:\ u=1, \cdots, \on\} \subset \bR^3$, $q=(q_1,\cdots, q_{\bar{m}})$,
\begin{align}
Z(q;& \{l^u, \varrho_u\}_{u=1}^{\on}) \nonumber \\
&=\sum_{g \in S(L)} \prod_{p \in \pd(\Sigma_i; L)}R(p)^{g(e^+(p)), g(\bar{e}^+(p)) }_{g(e^-(p)), g(\bar{e}^-(p)) }\prod_{p \in {\rm TDP}(\Sigma_i; L)}T(p)^{g(e^+(p)) }_{g(e^-(p)) }, \label{e.w.3}
\end{align}
whereby $S(L)$ denote all mappings $g$, such that
\begin{itemize}
  \item $g: E(\Sigma_i; L) \rightarrow \{1,2,\ldots, n\}$, $n$ is the maximum of all the $n_u$'s;
  \item for each $u$,\beq g|_{E(\Sigma_i; l^u)}: E(\Sigma_i; l^u) \rightarrow \{1,2,\ldots, n_u\}. \nonumber \eeq
\end{itemize}
Note that the Wilson Loop observable $Z(q; \{l^u, \varrho_u\}_{u=1}^{\on})$ is independent of any plane $\Sigma_i$, $i=1,2,3$, we choose to project onto to form a link diagram. This will in turn give us a state model for links, as discussed in \cite{CS-Lim02}.

\begin{rem}\label{r.r.1}
It is not true that the Wilson Loop observable given in Equation (\ref{e.w.3}) is invariant under ambient isotopy in $\bR^3$. To be a link invariant, it has to be invariant under 3 Reidemeister Moves. As such, this will impose a quantization condition on the charge $q_a$. Depending on the gauge group used, the quantization of the charge differs. When the gauge group $G_1$ is abelian, there is no restriction on $q_1$.

In the case of the fundamental representation on ${\rm SU}(N)= {\rm G}_a$, the charge squared $q_a^2$ must be an integer or half-integer, and the Wilson Loop observable will give us Homfly-type polynomial invariants. When $q_a^2$ is an even integer, we see that the Wilson Loop observable essentially gives us the linking number.

In the case of the fundamental representation on ${\rm SO}(4N)= {\rm G}_a$, the charge $q_a^2$ must be an odd integer, and the Wilson Loop observable will give us Conway-type polynomial invariants. In both cases, we see that the Chern-Simons path integral will quantize the charge. Refer to \cite{CS-Lim03} for the details.
\end{rem}

There is no sensible notation of a linking number for hyperlinks in $\bR^4$. In \cite{EH-Lim06}, we defined a time-like isotopic relation between time-like hyperlinks.  Any link invariant defined on $\pi_0(L)$, will be a time-like isotopic equivalence invariant for $L$.

\begin{notation}\label{n.l.1}
Recall in Section \ref{s.sum}, we have a time-like hyperlink $\chi(\oL, \uL)$, whereby $\oL = \{\ol^u:\ u=1, \cdots, \on\}$ is a matter hyperlink, coloured with representation $(\varrho_u, \rho_u)$ for each component loop $\ol^u$.

Define a set $\overline{T} = \{  \otau(1), \otau(2), \otau(3)\}$ containing 3 symbols and $V = \{1,2, \cdots, \on\}$. Consider the direct product $\overline{T} \times V$ and let $\mu = (\otau(i), u) \in \overline{T} \times V$, whereby $i=1,2,3$. Further define for $i=1,2,3$, $\overline{T}(i) \times V := \{(\otau(i), u): u \in V\}$.

Project $\pi_0[\chi(\ol^u, \uL)]$ onto $\Sigma_i$ to form a link diagram for each $i=1,2,3$. In future, it is useful to think of the matter hyperlink $\oL$, as a set of links $\oL \equiv \{ \pi_\mu(\oL):\ \mu \in \overline{T} \times V\}$, whereby when $\mu = (\otau(i), u)$, then $\pi_\mu(\oL)$ means we project $\ol^u$ onto a knot $\pi_0(\ol^u) \subset \bR^3$, and it defines a directed subgraph in $\Sigma_i$ as described in Item \ref{d.c.1.7} of Definition \ref{d.c.1}, and we have $E(\pi_\mu(\oL)):= E(\Sigma_i; \pi_0(\ol^u))$ denoting the set of edges in the directed subgraph on the plane $\Sigma_i$. The matter loop $\ol^\mu$ will be colored with representation $(\varrho_u, \rho_u)$.

Refer to Notation \ref{n.q.4}. If $\mu = (\otau(i), u)$, then we will let \beq \pd(\pi_\mu(\oL), \pi_0(\uL)) :=
\pd(\Sigma_i; \pi_0(\ol^u), \pi_0(\uL)).
\nonumber \eeq
%which is a subset of crossings formed from projecting $\pi_0[\chi(\ol^u, \uL)]$ onto $\Sigma_i$ to form a link diagram. %This link diagram will define a directed graph as described in Definition \ref{d.c.1}, and its corresponding set of edges will be denoted by $E(\pi_\mu(\oL), \pi_\mu(\uL))$.
\end{notation}

As a consequence, we see that for $i=1,2,3$,
\begin{align*}
Z(q; \{\pi_0(\ol^u), \varrho_u\}_{u=1}^{\on}) &=: Z(q; \{\ol^\mu, \varrho_\mu\}_{\mu \in \overline{T}(i) \times V}),
\end{align*}
for any $i=1,2,3$. It is understood that $\varrho_\mu = \varrho_u$ if $\mu = (\otau(i), u)$.

\section{Einstein-Hilbert action}

In this section, we summarise the definitions, notations and main results, taken from \cite{EH-Lim02, EH-Lim03}, which the reader should refer to for details.

Let $\mathfrak{su}(2)$ be the Lie Algebra of ${\rm SU}(2)$. Consider the following Lie Algebra $\mathfrak{su}(2) \times \mathfrak{su}(2)$ as described in \cite{EH-Lim03}. This direct product of Lie Algebras inherits the Lie bracket from $\mathfrak{su}(2)$ in the obvious way.

Let $\{\breve{e}_1, \breve{e}_2, \breve{e}_3\}$ be any basis for the first copy of $\mathfrak{su}(2)$ and
$\{\hat{e}_1, \hat{e}_2, \hat{e}_3\}$ be any basis for the second copy of $\mathfrak{su}(2)$, satisfying the conditions
\begin{align*}
[\breve{e}_1, \breve{e}_2] =& \breve{e}_3,\ \ [\breve{e}_2, \breve{e}_3] = \breve{e}_1,\ \ [\breve{e}_3, \breve{e}_1] = \breve{e}_2, \\
[\hat{e}_1, \hat{e}_2] =& \hat{e}_3,\ \ [\hat{e}_2, \hat{e}_3] = \hat{e}_1,\ \ [\hat{e}_3, \hat{e}_1] = \hat{e}_2.
\end{align*}

Using this basis, define \beq \mathcal{E}^+ = \sum_{i=1}^3\breve{e}_{i}\ ,\ \mathcal{E}^- = \sum_{i=1}^3\hat{e}_{i}. \nonumber \eeq

Let
\beq \hat{E}^{01} = (\breve{e}_1,0),\ \hat{E}^{02} = (\breve{e}_2,0) ,\ \hat{E}^{03} = (\breve{e}_3,0)  \nonumber \eeq and \beq \hat{E}^{23} = (0,\hat{e}_1) ,\ \hat{E}^{31} = (0,\hat{e}_2) ,\ \hat{E}^{12} = (0,\hat{e}_3). \nonumber \eeq Do note that $\hat{E}^{\alpha\beta} = -\hat{E}^{\beta\alpha}$.

Finally, denote \beq \mathcal{E} :=
\left(
  \begin{array}{cc}
    \mathcal{E}^+ &\ 0 \\
    0 &\ -\mathcal{E}^- \\
  \end{array}
\right) \cong (\mathcal{E}^+, -\mathcal{E}^-) \in \mathfrak{su}(2) \times \mathfrak{su}(2). \nonumber \eeq

For $\kappa > 0$, let $\overline{\mathcal{S}}_\kappa(\bR^4) \subset L^2(\bR^4)$ be a Schwartz space, which consists of functions of the form $f = p \cdot \sqrt{\phi_\kappa}$, whereby $p$ is a polynomial on $\bR^4$, and $\phi_\kappa(\vec{x}) = \frac{\kappa^4}{(2\pi)^2}e^{-\kappa^2|\vec{x}|^2/2}$ is the Gaussian function with variance $1/\kappa^2$. Suppose $V \rightarrow \bR \times \bR^3$ is a 4-dimensional trivial vector bundle.

Define
\begin{align}
\begin{split}
L_\omega^\kappa :=& \overline{\mathcal{S}}_\kappa(\bR^4) \otimes \Lambda^1(\bR^3)\otimes [\mathfrak{su}(2) \times \mathfrak{su}(2)], \\
L_e^\kappa :=& \overline{\mathcal{S}}_\kappa(\bR^4) \otimes \Lambda^1(\bR^3)\otimes V.
\end{split}\label{e.o.1}
\end{align}

The dynamical variables of General Relativity are given by $e$ and $\omega$, which are $V$-valued one form and spin connection respectively. A spin connection $\omega$ on $V$, can be written as $\omega \equiv A^a_{\alpha\beta} \otimes dx_a\otimes \hat{E}^{\alpha\beta}$, whereby $A^a_{\alpha\beta}: \bR^4 \rightarrow \bR$ is smooth and $E^{\alpha\beta} \in \mathfrak{su}(2) \times \mathfrak{su}(2)$. Under axial gauge fixing, every spin connection can be gauge transformed into \beq \omega = A^i_{\alpha\beta} \otimes dx_i \otimes \hat{E}^{\alpha\beta}\in \overline{\mathcal{S}}_\kappa(\bR^4) \otimes \Lambda^1(\bR^3)\otimes [\mathfrak{su}(2) \times \mathfrak{su}(2)] \equiv L_\omega^\kappa,\nonumber \eeq $A^i_{\alpha\beta}: \bR^4 \rightarrow \bR$ smooth, subject to the conditions \beq A^1_{\alpha\beta}(0, x_1, 0, 0) = 0,\ A^2_{\alpha\beta}(0, x_1, x_2, 0)=0,\ A^3_{\alpha\beta}(0, x_1, x_2, x_3) = 0. \label{e.r.1} \eeq There is an implied sum over repeated indices, $i$ runs from 1 to 3. We will drop all these restrictions on $A^i_{\alpha\beta}$.

\begin{rem}\label{r.lqg.1}
Note that $A^{i}_{\alpha\beta} = -A^{i}_{\beta\alpha}\in \overline{\mathcal{S}}_\kappa(\bR^4)$.
\end{rem}

Let $\{E^\gamma\}_{\gamma=0}^3$ be a basis for $V$. After axial gauge fixing, we only consider $e: T\bR^4 \rightarrow V$, of the form \beq e = B^i_\gamma \otimes dx_i \otimes E^\gamma \in \overline{\mathcal{S}}_\kappa(\bR^4) \otimes \Lambda^1(\bR^3) \otimes V  \equiv L_e^\kappa. \nonumber \eeq There is an implied sum over repeated indices, $i$ runs from 1 to 3.

\begin{rem}
Applying axial gauge fixing to both $\omega$ and $e$, is analogous to applying a Gauss constraint, as explained in \cite{EH-Lim07}.
\end{rem}

The curvature of $\omega$ is given by $R = d\omega + \omega \wedge \omega$, and the Einstein-Hilbert action is given by \beq S_{{\rm EH}}(e, \omega) = \frac{1}{8}\int_{\bR^4}e \wedge e \wedge R. \nonumber \eeq In terms of the variables $\{A_{\alpha \beta}^i, B_\gamma^j\}$, the Einstein-Hilbert action is written as
\begin{align*}
S_{{\rm EH}}(e, \omega) =& \frac{1}{8}\int_{\bR^4}\epsilon^{abcd}B^1_\gamma B^2_\mu[E^{\gamma \mu}]_{ab} \cdot \partial_0 A^3_{\alpha\beta}[E^{\alpha\beta}]_{cd} dx_1\wedge dx_2 \wedge dx_0 \wedge dx_3\\
+& \frac{1}{8}\int_{\bR^4}\epsilon^{abcd}B^2_\gamma B^3_\mu[E^{\gamma \mu}]_{ab} \cdot \partial_0 A^1_{\alpha\beta}[E^{\alpha\beta}]_{cd} dx_2\wedge dx_3 \wedge dx_0 \wedge dx_1\\
+&\frac{1}{8}\int_{\bR^4}\epsilon^{abcd}B^3_\gamma B^1_\mu[E^{\gamma \mu}]_{ab} \cdot \partial_0 A^2_{\alpha\beta}[E^{\alpha\beta}]_{cd} dx_3\wedge dx_1 \wedge dx_0 \wedge dx_2,
\end{align*}
which is an invariantly defined integral. There is an implied sum over repeated indices. And $\epsilon^{\mu \gamma \alpha \beta} \equiv \epsilon_{\mu \gamma \alpha \beta}$ is equal to 1 if the number of transpositions required to permute $(0123)$ to $(\mu\gamma\alpha\beta)$ is even; otherwise it takes the value -1.

\begin{rem}
Note that $[E^{\alpha\beta}]_{ab} = 1$ if $\alpha=a$, $\beta = b$; $[E^{\alpha\beta}]_{ab} = -[E^{\beta\alpha}]_{ab} = -1$ if $\alpha=b$, $\beta = a$; $[E^{\alpha\beta}]_{ab}= 0$ for all other cases.
\end{rem}

Recall from Section \ref{s.sum}, we color each component of the matter hyperlink with a representation. This means choose a representation $\rho_u\equiv (\rho_u^+, \rho_u^-): \mathfrak{su}(2) \times \mathfrak{su}(2) \rightarrow {\rm End}(V_u^+) \times {\rm End}(V_u^-)$ for each component loop $\ol^u$, $u=1, \ldots, \on$, in the hyperlink $\oL$. Therefore,
\beq \rho_u(\mathcal{E}) =
\left(
  \begin{array}{cc}
    \rho_u^+(\mathcal{E}^+) &\ 0 \\
    0 &\ -\rho_u^-(\mathcal{E}^+) \\
  \end{array}
\right). \nonumber \eeq
Note that we do not color the components loops in $\uL$, i.e. we do not choose a representation for $\uL$. When either $\rho_u^+$ or $\rho_u^-$ is zero, we will write $\rho_u = \rho_u^-$ and $\rho_u = \rho_u^+$ respectively.

\begin{rem}\label{r.l.3}
We will choose $\rho_u^\pm: \mathfrak{su}(2) \rightarrow {\rm End}(V_u^\pm)$ to be an irreducible representation. Note that the dimension of $V_u^\pm$ is given by $2j_{\rho_u^\pm} + 1$, $j_{\rho_u^\pm} \geq 0$ is an half-integer or an integer.  Then it is known that the Casimir operator is
\begin{align*}
\sum_{i=1}^3 \rho^+(\hat{E}^{0i})\rho^+(\hat{E}^{0i})  = -\xi_{\rho^+} I_{\rho^+},\\
\sum_{i=1}^3 \rho^-(\hat{E}^{\tau(i)})\rho^-(\hat{E}^{\tau(i)}) = -\xi_{\rho^-} I_{\rho^-},
\end{align*}
$I_{\rho^\pm}$ is the $2j_{\rho^\pm} + 1$ identity operator on $V^\pm$ and $\xi_{\rho^\pm} := j_{\rho^\pm}(j_{\rho^\pm}+1)$.
\end{rem}

Let $q_0$ be some real constant, known as the charge. Define
\begin{align}
\begin{split}
V(\{\ul^v\}_{v=1}^{\underline{n}})(e) :=& \exp\left[ \sum_{v=1}^{\un} \int_{\ul^v} \sum_{\gamma=0}^3 B^i_\gamma \otimes dx_i\right], \\
W(q_0; \{\ol^u, \rho_u\}_{u=1}^{\on})(\omega) :=& \prod_{u=1}^{\on}\Tr_{\rho_u}  \mathcal{T} \exp\left[ q_0\int_{\ol^u} A^i_{\alpha\beta} \otimes dx_i\otimes \hat{E}^{\alpha\beta}  \right].
\end{split}\label{e.l.7}
\end{align}
Here, $\mathcal{T}$ is the time-ordering operator as defined in Definition \ref{d.t.1}. And we sum over repeated indices, with $i$ taking values in 1, 2 and 3.

A loop representation is an Einstein-Hilbert path integral, of the form \beq \frac{1}{Z_{{\rm EH}}^\kappa}\int_{\omega \in L_\omega^\kappa,\ e \in L_e^\kappa}V(\{\ul^v\}_{v=1}^{\un})(e) W(q_0; \{\ol^u, \rho_u\}_{u=1}^{\on})(\omega)  e^{i S_{{\rm EH}}(e, \omega)}\ D[e] D[\omega], \label{ex.eh.1} \eeq whereby $D[e]$ and $D[\omega]$ are non-existent Lebesgue measures on $L_e^\kappa$ and $L_\omega^\kappa$ respectively and \beq Z_{{\rm EH}}^\kappa = \int_{\omega \in L_\omega^\kappa,\ e \in L_e^\kappa}e^{i S_{{\rm EH}}(e, \omega)}\ D[e] D[\omega]. \label{e.l.8} \eeq This path integral is indexed by $\kappa$, with $i = \sqrt{-1}$, and made sense of in \cite{EH-Lim02} using a set of Chern-Simon rules. By taking $\kappa$ going to infinity, it allows us to define a functional, termed as a Wilson Loop observable $Z(q_0; \{l^u, \rho_u\}_{u=1}^{\on}, \uL )$, of a colored hyperlink $\chi(\oL, \uL)$.

Explicitly, this Wilson Loop observable is given by
\begin{align}
Z&(q_0; \{l^u, \rho_u\}_{u=1}^{\on}, \uL)\nonumber \\
&:=\prod_{u=1}^{\on} \left( \Tr_{\rho^+_u}\ \exp[-\pi iq_0\ {\rm sk}(\ol^u, \uL) \cdot \mathcal{E}^+] +
\Tr_{\rho^-_u}\ \exp[\pi iq_0\ {\rm sk}(\ol^u, \uL) \cdot \mathcal{E}^-] \right). \label{e.weh.1}
\end{align}
Here, $\Tr_{\rho^\pm_u}$ is the matrix trace in the representation $\rho_u^\pm$ for $\mathfrak{su}(2)$. It depends on the hyperlinking number between each pair of matter and geometric loop $\ol^u$ and $\ul^v$, and the representation $\rho_u$ for each component matter loop. The detailed computations can be found in \cite{EH-Lim03}.

\begin{rem}
To compute the Wilson Loop observable, we only use the projection $\pi_0(\chi(\oL, \uL))$, and project it onto $\Sigma_i$, i=1,2,3, using link diagrams. Refer to \cite{EH-Lim03} for details.
\end{rem}

\begin{defn}\label{d.z.1}
We will write
\begin{align*}
Z^\pm(q_0; \{l^u, \rho_u\}_{u=1}^{\on}, \uL ) &:= \prod_{u=1}^{\on}\Tr_{\rho^\pm_u}\ \exp\left[\mp\pi iq_0\ {\rm sk}(\ol^u, \uL) \cdot \mathcal{E}^\pm \right].
\end{align*}
\end{defn}

\begin{notation}\label{n.q.1}
Refer to Notation \ref{n.q.4} and Definition \ref{d.c.2}. For \\
$p \in \bigcup_{u=1}^{\on}\bigcup_{k=1}^3 \pd(\Sigma_k; \pi_0(\ol^u), \pi_0(\uL))$, write \beq \tilde{Q}(p) := \exp\left[ -\pi i q_0\ \sigma(p)\rho_u(\mathcal{E})\right], \nonumber \eeq the crossing $p$ is also identified by 2 edges $(e^+(p), e^-(p))$, which lie in $E(\Sigma_k;\pi_0(\ol^u))$ for a unique $k$ and $u$.
\end{notation}

By Equation (\ref{e.sk.1}), we can write $Z(q_0; \{l^u, \rho_u\}_{u=1}^{\on}, \uL )$ as
\begin{align}
\prod_{u=1}^{\on} &\left( \Tr_{\rho_u^+}\ \exp[-\pi iq_0\ {\rm sk}(\ol^u, \uL) \cdot \mathcal{E}^+] +
\Tr_{\rho_u^-}\ \exp[\pi iq_0\ {\rm sk}(\ol^u, \uL) \cdot \mathcal{E}^-] \right) \nonumber \\
&= \prod_{u=1}^{\on}\left[ \Tr\ \prod_{p \in \bigcup_{k=1}^3 \pd(\Sigma_k; \pi_0(\ol^u), \pi_0(\uL))}\tilde{Q}(p) \right] \label{e.n.1}.
\end{align}
%Here, $\widetilde{\Tr}$ was defined in Definition \ref{d.tr.1}.
Because $\{\exp\left[ \pi i q_0\ \sigma(p)\rho_u(\mathcal{E})\right]:\ \sigma(p) = \pm 1\}$ commute, we see that the time-ordering operator $\mathcal{T}$ is no longer necessary.

\begin{rem}\label{r.l.2}
Note that Equation (\ref{e.weh.1}) is not invariant under time-like isotopy. We need to time-order the matter loop and geometric loop. See \cite{EH-Lim06}. By doing so,
\begin{align}
\Tr_{\rho_u^\pm}\ \exp\left[-\pi iq_0\ {\rm sk}(\ol^u, \ul^v) \cdot \mathcal{E}^\pm \right] &= \Tr_{\rho_u^\pm}\ \exp\left[3\alpha\pi iq_0\ {\rm lk}(\pi_0(\ol^u), \pi_0(\ul^v)) \cdot \mathcal{E}^\pm\right], \label{e.a.4}
\end{align}
whereby ${\rm lk}(\pi_0(\ol^u), \pi_0(\ul^v))$ denotes the linking number between oriented knots $\pi_0(\ol^u)$ and $\pi_0(\ul^v)$, and $\alpha$ takes values $\pm 1$, depending on the time-ordering. Also see Remark \ref{r.l.4}. As a result, Equation (\ref{e.weh.1}) is dependent on the linking number. It is an invariant under time-like isotopy that preserves time-ordering. Observe that in the above expression, it only involves the linking numbers between each projected matter loop and a projected geometric loop, inside $\bR^3$.

In the rest of this article, we will always assume that each pair of matter component loop and geometric component loop is time-ordered.
\end{rem}

\section{Unification}\label{s.uni}

\begin{notation}
Write $\vec{q} = (q, q_0) := (q_1, \cdots, q_{\bar{m}}, q_0) \in \bR^{\bar{m} + 1}$ be fixed throughout this article.
\end{notation}

\begin{notation}
Refer to Notation \ref{n.l.1}. Let $p_1, p_2, p_3$ be indices such that $p_1$, $p_2$ and $p_3$ take values in $\{2,3\} \equiv \otau(1)$, $\{1,3\} \equiv \otau(2)$ and $\{1,2\} \equiv \otau(3)$ respectively.
%For $i=1,2,3$,
%write $A_{{\otau}(i),\alpha}^{a, p_i}(\vec{x})=A_{{\otau}(i),\alpha}^{a, p_1}(x_1, x_2, x_3)$.

Define for $i=1,2,3$,
\begin{align*}
K_{{\otau}(i)}^{\kappa} :=& \left\{A_{{\otau}(i),\alpha}^{a,p_i} \otimes dx_{p_i} \otimes E_a^\alpha:\ A_{{\otau}(i),\alpha}^{a,p_i} \in \overline{\mathcal{S}}_\kappa(\bR^3) \right\}.
\end{align*}
There is an implicit sum over $\alpha$, $p_i$ and $a$, each $\alpha$ takes values from 1 to $N_a$, $p_i$ taking values in $\otau(i)$, and $a=  1, \cdots, \bar{m}$.

For each $i=1,2,3$, $A_{{\otau}(i),\alpha}^{a, p_i} \in \overline{\mathcal{S}}_\kappa(\bR^3)$ for $\kappa > 0$ fixed. We will write $A^{{\otau}(i)} = A_{{\otau}(i), \alpha}^{a, p_i} \otimes dx_{p_i} \otimes E_a^\alpha \in K_{{\otau}(i)}^{\kappa}$. Note its dependence on the parameter $\kappa$.

Further define
\begin{align*}
S_{{\rm CS}}(A^{\otau(i)}) = \frac{\varsigma}{4\pi} \int_{\mathbb{R}^3} \Tr \left[A^{\otau(i)} \wedge dA^{\otau(i)} + \frac{2}{3}A^{\otau(i)} \wedge A^{\otau(i)} \wedge A^{\otau(i)}\right],\ \varsigma \neq 0.
\end{align*}

Let $A^{\otau} = \{A^{{\otau}(i)}:\ i=1,2,3\}$. Note that $A^{\otau}$ defines a $\mathfrak{g}$-valued connection $\sum_{i=1}^3 A^{{\otau}(i)}$ on $\bR \times \bR^3$.
\end{notation}

In Equation (\ref{e.a.4}), we saw that only the linking number between a pair of projected matter and projected geometric loops appears in the Wilson Loop observable. The linking number between pairs of projected component matter loops do not appear in the formula. Furthermore, it does not give us Homfly-type of knot invariants. It would seem that the Einstein-Hilbert theory is not complete.

Now suppose we try to merge Chern-Simons theory and Einstein-Hilbert theory together. That is, we consider the gauge group ${\rm G} \times [{\rm SU}(2) \times {\rm SU}(2)]$. The action we want to consider is \beq S(A^{\otau}, e, \omega) := \sum_{i=1}^3  S_{{\rm CS}}(A^{\otau(i)}) + S_{{\rm EH}}(e, \omega). \nonumber \eeq  Write
\begin{align*}
\Omega^\kappa &:= \left(\{K_{\otau(i)}^{\kappa}\}_{i=1}^3, L_\omega^\kappa, L_e^\kappa \right),\quad
D[A^{\otau}, e, \omega] = \left[\prod_{i=1}^3D[A^{\otau(i)}] \right] D[e] D[\omega],
\end{align*}
whereby $D[A^{\otau}, e, \omega]$ is some non-existent Lebesgue measure.

\begin{notation}\label{n.n.2}
Recall in Section \ref{s.sum}, we defined a matter (geometric) time-like hyperlink $\oL$ ($\uL$), and both hyperlinks tangled together to form $\chi(\oL, \uL)$.

Refer to Equations (\ref{e.b.1}) and (\ref{e.l.7}). Write for $i=1,2,3$,
\begin{align*}
S(\{q_a\}_{a=1}^{\bar{m}}, \ol^u, \varrho_u)(A^{{\otau}(i)})
&:=
\exp\left[\sum_{a=1}^{\bar{m}} q_a\int_{\pi_0(\ol^u)} A_{{\otau}(i),\alpha}^{a, p_i} \otimes dx_{p_i} \otimes \varrho_u(E_a^\alpha) \right], \\
\bar{S}(q_0, \ol^u, \rho_u)(\omega) &:= \exp\left[ \frac{q_0}{3}\int_{\ol^u} A^j_{\alpha\beta} \otimes dx_j\otimes \rho_u(\hat{E}^{\alpha\beta})\right].
\end{align*}
Thus, define
\begin{align*}
W(\vec{q}; \ol^u, \varrho_u, \rho_u)(A^{\otau(i)}, \omega) &=
\Tr\  \mathcal{T}
\left(
  \begin{array}{cc}
    S(\{q_a\}_{a=1}^{\bar{m}}, \ol^u, \varrho_u)(A^{{\otau}(i)}) &\  0 \\
     0&\ \bar{S}(q_0, \ol^u, \rho_u)(\omega)  \\
  \end{array}
\right).
\end{align*}

And \beq W(\vec{q}; \{\ol^u, \varrho_u, \rho_u \}_{u=1}^{\on})(A^{\otau}, \omega) := \prod_{u=1}^{\on} W(\vec{q}; \ol^u, \varrho_u, \rho_u)(A^{\otau}, \omega), \nonumber
\eeq for
\begin{align*}
W(\vec{q};\ &\ol^u, \varrho_u, \rho_u)(A^{\otau}, \omega) := \prod_{i=1}^3 W(\vec{q}; \ol^u, \varrho_u, \rho_u)(A^{\otau(i)}, \omega).
\end{align*}

We will write \beq Z_{{\rm CS}}^{\kappa, \otau(i)} = \int_{A^{\otau(i)} \in K_{{\otau}(i)}^{\kappa}} e^{iS_{{\rm CS}}(A^{\otau(i)})}DA^{\otau(i)},\ {\rm and}\ Z_{{\rm CS}}^\kappa  = \prod_{i=1}^3 Z_{{\rm CS}}^{\kappa, \otau(i)}.\nonumber \eeq
\end{notation}

We want to make sense of the following path integral
\beq \frac{1}{Z^\kappa}\int_{(A^{\otau}, \omega, e)\in \Omega^\kappa}V(\{\ul^v\}_{v=1}^{\un})(e) W(\vec{q}; \{\ol^u, \varrho_u, \rho_u\}_{u=1}^{\on})(A^{\otau}, \omega)  e^{i S(A^{\otau}, e, \omega)}\ D[A^{\otau}, e, \omega], \label{ex.toe.1} \eeq whereby $Z^\kappa = Z_{{\rm CS}}^\kappa Z_{{\rm EH}}^\kappa$, and compute its limit as $\kappa \rightarrow \infty$. Note that $V(\{\ul^v\}_{v=1}^{\un})(e)$ and $Z_{{\rm EH}}^\kappa$ were defined in Equations (\ref{e.l.7}) and (\ref{e.l.8}) respectively.

\begin{notation}\label{n.n.1}
We will write for $i=1,2,3$,
\begin{align*}
W_u(A^{\otau(i)})
&:=
\left(
  \begin{array}{cc}
    S(\{q_a\}_{a=1}^{\bar{m}}, \ol^u,\varrho_u)(A^{{\otau}(i)}) &\ 0 \\
    0 &\ I_{j_{\rho_u} } \\
  \end{array}
\right), \\
W_u(\omega)
&:=
\left(
  \begin{array}{cc}
    I_{n_u} &\ 0 \\
    0 &\ \bar{S}(q_0, \ol^u, \rho_u)(\omega) \\
  \end{array}
\right),
\end{align*}
which commute with each other. Here, $I_{n_u}$ and $I_{j_{\rho_u}}$ denotes the $n_u$ and $2[j_{\rho_u}^+ + j_{\rho_u}^- +1]$ identity matrices respectively.
\end{notation}

\begin{rem}\label{r.r.2}
Note that \beq W_u(A^{\otau(i)})W_u(\omega) =
\left(
  \begin{array}{cc}
    S(\{q_a\}_{a=1}^{\bar{m}}, \ol^u,\varrho_u)(A^{{\otau}(i)}) &\ 0 \\
    0 &\ \bar{S}(q_0, \ol^u, \rho_u)(\omega) \\
  \end{array}
\right)
. \nonumber \eeq
We can regard $\mathcal{T}[ W_u(A^{\otau(i)})W_u(\omega)]$, as the holonomy operator along the loop $\ol^u$, given the $\mathfrak{g} \times [\mathfrak{su}(2) \times \mathfrak{su}(2)]$-valued connection $\{A^{\otau(i)}, A^{j}_{\alpha\beta} \otimes dx_{j}\otimes \rho(\hat{E}^{\alpha\beta})\}$, for $i=1,2,3$.
\end{rem}

\begin{defn}($\widetilde{\Tr}$)\label{d.tr.1}\\
Recall $\overline{T} \times V$ from Notation \ref{n.l.1}. Label each elements in $\overline{T} \times V$ uniquely using elements in the set $\{1, 2, \cdots, 3\on\}$, in any order.

Define a linear functional $\widetilde{\Tr}$ as follows. Suppose a matrix $A$ is indexed by time $s$, and by component $\theta$, $\theta \in \{1, 2, \cdots, 3\on\}$. In other words, $A \equiv A(\theta ,s)$. Let $\{A(\pi_1,s_1),\ldots, A(\pi_n,s_n)\}$ be a finite set of matrices. Let
\begin{align*}
S_{\theta} &= \{v \in \{1,\ldots,n \}: \pi_v = \theta\},
\end{align*}
and write $m_{\theta} := |S_{\theta}|$. For any $n \geq 1$, define a linear operator,
\begin{align*}
&\widetilde{\Tr}: A(\pi_1,s_1) \otimes \cdots \otimes A(\pi_n,s_n)\longmapsto  \\
&  \Tr[A( 1,s_{\beta_1(1)})\cdots \otimes A(1,s_{\beta_1(m_1)}) ]  \cdots \Tr [A(r,s_{\beta_r(1)}) \cdots A(r,s_{\beta_r(m_r)})],
\end{align*}
such that for each $i=1,\ldots, r$, $s_{\beta_i(1)} > s_{\beta_i(2)} > \ldots > s_{\beta_i(m_i)}$ and $\beta_\theta(j) \in S_\theta$ for $j=1,\ldots, m_\theta$.
\end{defn}

\begin{notation}\label{n.q.3}
Refer to Definition \ref{d.c.3} and Notations \ref{n.q.2} and \ref{n.q.1}. Here, we define the $R$ matrix as \beq R(p) =
\left(
  \begin{array}{cc}
    \exp\left[-\varepsilon(p)\sum_{a=1}^{\bar{m}} \pi i q_a^2 \sum_{\alpha=1}^{N_a} \varrho_u(E_a^\alpha)\otimes \varrho_v(E_a^\alpha) \right] &\ 0 \\
    0 &\ I_{j_{\rho_u}} \otimes I_{j_{\rho_v}} \\
  \end{array}
\right). \nonumber \eeq And
\beq T(p) =
\left(
  \begin{array}{cc}
    \exp\left[-\varepsilon(p)\sum_{a=1}^{\bar{m}}\frac{\pi i q_a^2}{2}\sum_{\alpha=1}^{N_a} \varrho_u(E_a^\alpha)\cdot \varrho_u(E_a^\alpha) \right] &\ 0 \\
    0 &\ I_{j_{\rho_u}} \\
  \end{array}
\right), \nonumber \eeq with
\beq Q(p) =
\left(
  \begin{array}{cc}
    I_{n_u} &\ 0 \\
    0 &\ \exp\left[ \pi i q_0\ \sigma(p)\rho_u(\mathcal{E})\right] \\
  \end{array}
\right). \nonumber \eeq

Recall that $\varrho_u: \mathfrak{g} \rightarrow {\rm End}(\bC^{n_u})$ is an irreducible representation, and $\rho_u \equiv (\rho_u^+, \rho_u^-)$ is given by a pair of irreducible representations of $\mathfrak{su}(2) \times \mathfrak{su}(2)$. For each $u=1, \cdots, \on$, let $m_u^\pm$ be the dimension of representation $\rho_u^\pm$. And write $m_u := m_u^+ + m_u^-$.

Let $n$ and $m$ be the maximum of all the $n_u$'s and $m_u$'s respectively.
\end{notation}

\begin{thm}\label{t.u.1}
Recall we defined $\pd(\Sigma_k, \pi_0(\oL))$ in Notation \ref{n.q.4} and the product set $\overline{T} \times V$ in Notation \ref{n.l.1}. The component matter loops in $\oL$ will be labelled as $\{\ol^\mu,\varrho_\mu, \rho_\mu: \mu \in \overline{T} \times V\}$ and it is understood that $\varrho_\mu = \varrho_u$, $\rho_\mu = \rho_u$ if $\mu = (\otau(i), u)$.

Suppose $\mathcal{S}$ is $\overline{T} \times V$.
Define for $\mathcal{S}$,
\begin{align}
Z(\vec{q}&;\{\ol^\mu, \varrho_\mu, \rho_\mu\}_{\mu \in \mathcal{S}}, \uL ) \nonumber \\
&:= \sum_{g \in S(\mathcal{\oL})} \prod_{p \in U_1}R(p)^{g(e^+(p)), g(\bar{e}^+(p)) }_{g(e^-(p)), g(\bar{e}^-(p)) }\prod_{p \in U_2}T(p)^{g(e^+(p)) }_{g(e^-(p)) }\prod_{p \in U_3}Q(p)^{g(e^+(p)) }_{g(e^-(p)) }, \label{e.z.2}
\end{align}
whereby (Refer to Notations \ref{n.q.4} and \ref{n.l.1}, and Item \ref{i.f.1} of Definition \ref{d.c.1}.)
\begin{align}
\label{e.u.1}
\begin{split}
U_1 &= \bigcup_{i=1}^3\pd(\Sigma_i; \pi_0(\oL)), \\
U_2 &= \bigcup_{i=1}^3{\rm TDP}(\Sigma_i; \pi_0(\oL)) , \quad
U_3 =  \bigcup_{\mu \in \mathcal{S}}\pd(\pi_\mu(\oL), \pi_0(\uL)),
\end{split}
\end{align}
and $S(\oL)$ denote all mappings $g$, such that
\begin{itemize}
  \item $g: \bigcup_{\mu \in \mathcal{S}}E(\pi_\mu(\oL)) \longrightarrow \{1,2,\ldots, n+m\}$;
  \item for each $\mu= (\otau(i), u) \in \mathcal{S}$, \beq g|_{E(\pi_\mu(\oL))}: E(\pi_\mu(\oL)) \longrightarrow \{1,2,\ldots, n_u+ m_u\}. \nonumber \eeq
\end{itemize}

Using the definitions of $Z(q, \{\ol^\mu, \varrho_\mu\}_{\mu \in \overline{T}(i) \times V})$ and $Z(q_0; \{\ol^u, \rho_u\}_{u=1}^{\on}, \uL )$, we see that the limit of Expression \ref{ex.toe.1} as $\kappa \rightarrow \infty$, is given by $Z(\vec{q};\{\ol^\mu, \varrho_\mu, \rho_\mu\}_{\mu \in \overline{T} \times V}, \uL )$. We will henceforth refer it as the Wilson Loop observable for the tangled time-like hyperlink $\chi(\oL, \uL)$.

%Indeed, we can write
%\begin{align}
%Z(\vec{q};&\{\ol^\mu, \varrho_\mu, \rho_\mu\}_{\mu \in \overline{T} \times V}, \uL )
%= \prod_{i=1}^3 Z(\vec{q};\{\ol^\mu, \varrho_\mu, \rho_\mu\}_{\mu \in \overline{T}(i) \times V}, \uL ). \label{e.z.1}
%\end{align}
\end{thm}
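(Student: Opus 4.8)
The plan is to reduce Expression \ref{ex.toe.1} to a product of two averages that have already been evaluated: the Chern-Simons Wilson Loop observable of Equation (\ref{e.w.2}) on each plane $\Sigma_i$, and the Einstein-Hilbert loop representation of Expression \ref{ex.eh.1}. The mechanism is the block-diagonal structure recorded in Remark \ref{r.r.2}. Since the total action obeys $S(A^{\otau}, e, \omega) = \sum_{i=1}^3 S_{\rm CS}(A^{\otau(i)}) + S_{\rm EH}(e, \omega)$, the formal measure splits as $D[A^{\otau}, e, \omega] = [\prod_{i=1}^3 D[A^{\otau(i)}]] D[e] D[\omega]$, and $Z^\kappa = Z_{\rm CS}^\kappa Z_{\rm EH}^\kappa$, the weight $e^{iS}$ factorizes into a Chern-Simons factor depending only on the connections $A^{\otau}$ and an Einstein-Hilbert factor depending only on $(e, \omega)$. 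First I would observe that, by Notation \ref{n.n.1}, each holonomy $\mathcal{T}[W_u(A^{\otau(i)}) W_u(\omega)]$ is a block-diagonal matrix whose upper block $S(\{q_a\}_{a=1}^{\bar{m}}, \ol^u, \varrho_u)(A^{\otau(i)})$ involves only the Chern-Simons field on $\Sigma_i$ and whose lower block $\bar{S}(q_0, \ol^u, \rho_u)(\omega)$ involves only the spin connection. Applying $\widetilde{\Tr}$ (Definition \ref{d.tr.1}) over the $3\on$ strands $\mu \in \overline{T} \times V$ therefore produces a product over strands, each factor being the trace of a block-diagonal, time-ordered product, hence the sum of a pure Chern-Simons trace and a pure Einstein-Hilbert trace.

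Next I would carry out the two averages separately. Because the fields $A^{\otau(1)}, A^{\otau(2)}, A^{\otau(3)}$ are independent and each lives on $\bR \times \Sigma_i \cong \bR^3$, integrating the Chern-Simons block against $e^{iS_{\rm CS}(A^{\otau(i)})}$ and letting $\kappa \to \infty$ is exactly the evaluation of Equation (\ref{e.w.2}) via the state-sum of Equation (\ref{e.w.3}); this supplies the $R$-matrix weights over $U_1 = \bigcup_{i=1}^3 \pd(\Sigma_i; \pi_0(\oL))$ and the $T$-matrix weights over $U_2 = \bigcup_{i=1}^3 {\rm TDP}(\Sigma_i; \pi_0(\oL))$. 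Integrating the Einstein-Hilbert block together with $V(\{\ul^v\}_{v=1}^{\un})(e)$ against $e^{iS_{\rm EH}(e, \omega)}$ is the loop representation Expression \ref{ex.eh.1}, whose $\kappa \to \infty$ limit is Equation (\ref{e.weh.1}); rewriting this as in Equation (\ref{e.n.1}) produces the $Q$-matrix weights over $U_3 = \bigcup_{\mu \in \mathcal{S}} \pd(\pi_\mu(\oL), \pi_0(\uL))$, the charge $q_0/3$ in $\bar{S}$ combining across the three projections to restore the full hyperlinking number ${\rm sk}(\ol^u, \uL)$ in the exponent of $Q(p)$.

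Finally I would reassemble the two sectors into the single state-sum of Equation (\ref{e.z.2}). The matrices $R(p)$, $T(p)$ and $Q(p)$ of Notation \ref{n.q.3} are themselves block-diagonal, with $R, T$ acting as the identity on the Einstein-Hilbert block and $Q$ acting as the identity on the Chern-Simons block. Consequently a matrix element $R(p)^{g(e^+), g(\bar{e}^+)}_{g(e^-), g(\bar{e}^-)}$, and likewise for $T$ and $Q$, is nonzero only when the labels it couples lie in a common block, so a labelling $g$ valued in $\{1, \ldots, n_u + m_u\}$ contributes precisely when it sends every edge of a given component entirely into the Chern-Simons range $\{1, \ldots, n_u\}$ or entirely into the Einstein-Hilbert range. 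This recovers both the summation set $S(\oL)$ and the per-strand ``Chern-Simons plus Einstein-Hilbert'' decomposition seen in Equation (\ref{e.weh.1}), identifying the factorized answer with $Z(\vec{q}; \{\ol^\mu, \varrho_\mu, \rho_\mu\}_{\mu \in \overline{T} \times V}, \uL)$. The main obstacle I anticipate is precisely this reassembly bookkeeping: one must check that the off-diagonal, mixed-sector contributions genuinely vanish by block-diagonality, that the time-ordering encoded in $\widetilde{\Tr}$ is consistent across the three independent Chern-Simons copies and the single Einstein-Hilbert evaluation, and that the normalization $q_0/3$ together with the threefold projection reproduces the charge $q_0$ and the full hyperlinking number in $Q(p)$. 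Once this combinatorial matching is in place, the two previously established $\kappa \to \infty$ limits yield the claim.
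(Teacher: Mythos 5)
Your proposal is correct and follows essentially the same route as the paper's proof: exploit the block-diagonal holonomies of Remark \ref{r.r.2} and the additive action to factor Expression \ref{ex.toe.1} into the three Chern-Simons averages and one Einstein-Hilbert average, evaluate these by the previously established limits (Equation (\ref{e.w.3}) for the $R$, $T$ factors and Equations (\ref{e.weh.1})--(\ref{e.n.1}) for the $Q$ factors, with $q_0/3$ tripled across the three projections), and reassemble into the state sum via block-diagonality of $R$, $T$, $Q$ under $\widetilde{\Tr}$. The paper organizes the same bookkeeping through the tensor products $\overline{I} \otimes \overline{J} \otimes \overline{K}$ and the commutativity of $Q$ with $R$ and $T$, but the substance matches yours throughout.
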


\begin{rem}\label{r.l.1}
\begin{enumerate}
  \item The above Wilson Loop observable takes into account, how each component matter loop is 'linked' with other component matter loops inside $\oL$, and the hyperlinking number between each pair of component matter loop and component geometric loop. It does not consider how each component geometric loop is `linked' inside $\uL$ with other component geometric loops.
  \item When we compute the Wilson Loop observable and other physical quantities in LQG, we never consider $\pi_i(\chi(\oL, \uL))$, but only consider projection using $\pi_0$. See \cite{EH-Lim03}.
  \item\label{r.l.1c} When $\uL = \emptyset$, then the Wilson Loop observable becomes
  \begin{align*}
\prod_{i=1}^3 Z(\vec{q};\{\ol^\mu, \varrho_\mu, \rho_\mu\}_{\mu \in \overline{T}(i) \times V}, \emptyset )
&:= \prod_{i=1}^3 Z(q;\{\ol^\mu, \varrho_\mu\}_{\mu \in \overline{T}(i) \times V}).
\end{align*}
We will refer it as a quantum Chern-Simons invariant for a time-like hyperlink.
\end{enumerate}
\end{rem}

\begin{proof}
Refer to Notation \ref{n.n.1}. We will write $B^{\otimes^k} = B \otimes B \otimes \cdots \otimes B$, $k$ copies of $B$. Write \beq  W(A^{{\otau}(i)}) = \bigotimes_{u=1}^{\on}W_u(A^{{\otau}(i)}), \quad W(\omega) =
\bigotimes_{u=1}^{\on}W_u(\omega), \label{e.w.6} \eeq and note that
\beq \bigotimes_{i=1}^3 W_u(A^{{\otau}(i)})W_u(\omega) = \left[ \bigotimes_{i=1}^3 W_u(A^{{\otau}(i)})\right] W_u(\omega)^{\otimes^3}. \nonumber \eeq
Hence from Remark \ref{r.r.2}, we can write
Expression \ref{ex.toe.1} as \beq \widetilde{\Tr}\bigotimes_{i=1}^3 Y_i \otimes \frac{1}{Z_{{\rm EH}}}\int_{\omega, e} W(\omega)^{\otimes^3}V(\{\ul^v\}_{v=1}^{\un})(e) e^{iS_{{\rm EH}}(e,\omega)} D[e]D[\omega], \label{ex.toe.2} \eeq where
\begin{align*}
Y_i^\kappa &= \frac{1}{Z_{{\rm CS}}^{\kappa, \otau(i)}}\int_{A^{{\otau}(i)}\in K_{\otau(i)}^\kappa} W(A^{{\otau}(i)}) e^{iS_{{\rm CS}}(A^{{\otau}(i)})} D[A^{{\otau}(i)}],
\end{align*}
indexed by $\kappa$, which we have omitted from the notation in Expression \ref{ex.toe.2}.

In \cite{CS-Lim02}, we explained how after taking into account of the self-linking problem by considering a frame, a Chern-Simons path integral Expression \ref{e.w.2} will give us
\begin{align*}
\lim_{\kappa \rightarrow \infty} \frac{1}{Z^{\cdot, \otau(i)}_{{\rm CS}}}\int_{A^{{\otau}(i)}} \bigotimes_{u=1}^{\on}W_u(A^{{\otau}(i)}) e^{iS_{{\rm CS}}(A^{{\otau}(i)})}& D[A^{{\otau}(i)}] \\
&=\bigotimes_{p \in \pd(\Sigma_i; \pi_0(\oL))}\mathcal{R}(p) \  \bigotimes_{p \in {\rm TDP}(\Sigma_i; \pi_0(\oL))}\mathcal{T}(p),
\end{align*}
whereby we write
\begin{align*}
\mathcal{R}(p)
:= \left(
  \begin{array}{cc}
   \tilde{E}(p)  &\ 0 \\
    0 &\ \tilde{G} \\
  \end{array}
\right), \quad
\mathcal{T}(p) :=
\left(
  \begin{array}{cc}
   \tilde{F}(p)  &\ 0 \\
    0 &\ \tilde{G} \\
  \end{array}
\right),
\end{align*}
with $\tilde{G} :=I_{j_{\rho_u}} \otimes I_{j_{\rho_v}}$, and
\begin{align*}
\tilde{E}(p) &:= \exp\left[-i\pi \sum_{a=1}^{\bar{m}}q_a^2 \varepsilon(p)\sum_{\alpha=1}^{N_a} \varrho_u(E^\alpha)(e^+(p), e^-(p)) \otimes \varrho_v(E^\alpha)(\bar{e}^+(p), \bar{e}^-(p)) \right], \\
\tilde{F}(p) &:= \exp\left[-\frac{i\pi}{2} \sum_{a=1}^{\bar{m}}q_a^2 \varepsilon(p)\sum_{\alpha=1}^{N_a} \varrho_u(E^\alpha)(e^+(p), e^-(p)) \otimes \varrho_u(E^\alpha)(e^+(p), e^-(p)) \right].
\end{align*}
Note that the edges $(e^+(p), e^-(p))$ ($(\bar{e}^+(p), \bar{e}^-(p))$) belong to the subgraph formed from projecting $\pi_0(\ol^u)$ ($\pi_0(\ol^v)$) on $\Sigma_i$. And $\varrho_u(E^\alpha)(e^+(p), e^-(p)) \equiv \varrho_u(E^\alpha)$, except that we use $\{e^+(p), e^-(p)\}$ to time-order the matrices, to be arranged according to $\widetilde{\Tr}$.

There is ambiguity in the ordering of the tensor products above, so it is not well-defined. However, when we apply the operator $\widetilde{\Tr}$ given in Definition \ref{d.tr.1}, the RHS of the above equation will be well-defined.

Refer to Equation (\ref{e.n.1}). By taking the limit as $\kappa$ goes to infinity, we showed in \cite{EH-Lim03}, the Einstein-Hilbert path integral Expression \ref{ex.eh.1} will give us
\begin{align*}
\lim_{\kappa \rightarrow \infty}& \frac{1}{Z^\kappa_{{\rm EH}}}\int_{\{\omega \in L_\omega^\kappa, e \in L_e^\kappa\}} \bigotimes_{u=1}^{\on}W_u(\omega)^{\otimes^3}V(\{\ul^v\}_{v=1}^{\un})(e)e^{iS_{{\rm EH}}(e,\omega)} D[e]D[\omega] = \bigotimes_{k=1}^3 \overline{B}(k),
\end{align*}
whereby \beq \overline{B}(k) = \bigotimes_{p \in \bigcup_{u=1}^{\on}\pd(\Sigma_k; \pi_0(\ol^u), \pi_0(\uL))}Q(p), \nonumber \eeq
because for $u=1, \cdots, \on$, each $p \in \bigcup_{k=1}^3\pd(\Sigma_k; \pi_0(\ol^u), \pi_0(\uL))$ is considered 3 times, due to taking the tensor product $W_u(\omega)^{\otimes^3}$.

Refer to Notation \ref{n.l.1}. Therefore, Expression \ref{ex.toe.2} becomes $\widetilde{\Tr}[ \overline{I} \otimes \overline{J}\otimes  \overline{K}]$, whereby
\begin{align}
\label{e.d.1}
\begin{split}
\overline{I} :=& \bigotimes_{i=1}^3\bigotimes_{p \in \pd(\Sigma_{i};\pi_0(\oL))}\mathcal{R}(p), \quad
\overline{J} :=
\bigotimes_{i=1}^3 \bigotimes_{p \in {\rm TDP}(\Sigma_i; \pi_0(\oL))}\mathcal{T}(p), \\
\overline{K} :=&
\bigotimes_{i=1}^3 \bigotimes_{r \in \overline{W}_i}Q(e^+(r),e^-(r)),
\end{split}
\end{align}
whereby \beq \overline{W}_i = \bigcup_{u=1}^{\on}\pd(\Sigma_i; \pi_0(\ol^u), \pi_0(\uL)). \label{e.w.5} \eeq
Here, we label each $r$ in $\bigcup_{u=1}^{\on}\pd(\Sigma_i; \pi_0(\ol^u), \pi_0(\uL))$, with edges $\{e^+(r),e^-(r)\}$, as explained in Notation \ref{n.q.4}.

We can rearrange $\overline{I} \otimes \overline{J}\otimes \overline{K}$ as \beq \mathcal{Y}_1 \otimes \mathcal{Y}_2 \otimes \mathcal{Y}_3, \label{ex.xy.1} \eeq whereby
\begin{align*}
\mathcal{Y}_i &:= \bigotimes_{p \in \pd(\Sigma_{i};\pi_0(\oL))}\mathcal{R}(p) \otimes\  \bigotimes_{p \in {\rm TDP}(\Sigma_i; \pi_0(\oL))}\mathcal{T}(p)
\bigotimes_{p \in \bigcup_{u=1}^{\on}\pd(\Sigma_i; \pi_0(\ol^u), \pi_0(\uL))} Q(p).
\end{align*}

Now $Q$ commutes with $R$ and $T$. By following the arguments used in the main proof in \cite{CS-Lim02}, together with using Notation \ref{n.l.1}, after time-ordering, we see that we can write Expression \ref{ex.xy.1} as \beq \bigotimes_{p \in U_1}R(p)^{g(e^+(p)), g(\bar{e}^+(p)) }_{g(e^-(p)), g(\bar{e}^-(p)) }\bigotimes_{p \in U_2}T(p)^{g(e^+(p)) }_{g(e^-(p)) }\bigotimes_{p \in U_3}Q(p)^{g(e^+(p)) }_{g(e^-(p)) }, \nonumber \eeq with $U_i$, $i=1,2,3$ defined by Equation (\ref{e.u.1}), and $\mathcal{S} = \overline{T} \times V$. When we apply $\widetilde{\Tr}$ defined in Definition \ref{d.tr.1} to it, the calculations in \cite{CS-Lim02} will gives us Equation (\ref{e.z.2}). This allows us to define a state model for hyperlinks.
%Similarly, for each $i=1,2,3$, $\widetilde{\Tr}[ \mathcal{Y}_i]$ is equal to $Z(\vec{q};\{\ol^\mu, \varrho_\mu, \rho_\mu\}_{\mu \in \overline{T}(i) \times V}, \uL )$, which is a state model. This proves Equation (\ref{e.z.1}).
\end{proof}

\begin{rem}
Now it is not true that Equation (\ref{e.z.2}) is invariant under diffeomorphism of $\bR^3$, or under time-like isotopy, preserving the time-ordering, as defined in \cite{EH-Lim06}. This is because the $R$-matrices must satisfy the Yang-Baxter equation, as discussed in \cite{CS-Lim02, CS-Lim03}. Hence, it is necessary to restrict the values of $\{q_a\}_{a=1}^{\bar{m}}$.
When the gauge group ${\rm G}_a$ is ${\rm SU}(N)$, we need to restrict $q_a^2$ to be either an integer or half-integer. This will quantized $q_a$'s.

Furthermore, we need to time-order between each pair of component matter loop and component geometric loop. No time-ordering is necessary between a pair of component matter (geometric) loops in $\oL$ ($\uL$).
\end{rem}

\section{Topological Quantum Field Theory}\label{s.tqt}

Suppose we have a closed orientable surface $S \subset \bR^4$, disjoint from the geometric hyperlink $\uL$. We can extend the result from previous section by including the curvature operator. Recall from \cite{EH-Lim05}, we quantized the $\mathfrak{su}(2) \times \mathfrak{su}(2)$-valued curvature operator (summing over repeated indices)
\begin{align}
\begin{split}
F_S(\omega) :=& \frac{1}{2}\int_S \frac{\partial A_{\alpha\beta}^i}{\partial x_0} \otimes dx_0 \wedge dx_i \otimes \hat{E}^{\alpha\beta} + \frac{\partial A_{\alpha \beta}^j}{\partial x_i} \otimes dx_i \wedge dx_j \otimes \hat{E}^{\alpha \beta}  \\
&\ \ \ \ \ \ + A_{\alpha\beta}^iA_{\gamma\mu}^j\otimes dx_i \wedge dx_j \otimes [\hat{E}^{\alpha\beta}, \hat{E}^{\gamma\mu}],
\end{split} \label{e.f.1}
\end{align}
as
\beq \frac{1}{Z_{{\rm EH}}^\kappa}\int_{\omega \in L_\omega^\kappa,\ e \in L_e^\kappa}F_S(\omega) V(\{\ul^v\}_{v=1}^{\un})(e) W(q_0; \{\ol^u, \rho_u\}_{u=1}^{\on})(\omega)  e^{i S_{EH}(e, \omega)}\ D[e] D[\omega], \label{ex.ehc.1} \eeq whereby $V$ and $W$ were defined in Equation (\ref{e.l.7}), $D[e]$ and $D[\omega]$ are non-existent Lebesgue measures on $L_e^\kappa$ and $L_\omega^\kappa$ respectively and $Z_{{\rm EH}}^\kappa$ is a normalization constant given by Equation (\ref{e.l.8}). This path integral was made sense of in \cite{EH-Lim02}, using the Chern-Simon rules.

By taking the limit as $\kappa \rightarrow \infty$, we showed in \cite{EH-Lim05}, that the limit exists and is given by
\begin{align}
\hat{F}_S[Z(q_0; \{\ol^u, \rho_u\}_{u=1}^{\on}, \uL)] :=& -i\sqrt{4\pi}\ {\rm lk}(\uL, S) \otimes \left( \mathcal{E}^+, - \mathcal{E}^- \right)Z(q_0; \{\ol^u, \rho_u\}_{u=1}^{\on}, \uL) \nonumber \\
\equiv &  -i\sqrt{4\pi}\ {\rm lk}(\uL, S) Z(q_0; \{\ol^u, \rho_u\}_{u=1}^{\on}, \uL) \otimes \mathcal{E}. \label{e.ehc.2}
\end{align}
Recall that $\left( \mathcal{E}^+, - \mathcal{E}^- \right) \in \mathfrak{su}(2) \times \mathfrak{su}(2)$.
In defining the linking number between the geometric hyperlink $\uL$ and $S$, we assume that $\pi_a(\uL)$ intersect $S$ finitely many times, $a=0, 1, 2, 3$. Furthermore, $S \subset \bR^4$ is a closed orientable surface and is ambient isotopic to a closed surface in $\{0 \} \times \bR^3$. See \cite{EH-Lim06}.

\begin{thm}\label{t.u.3}
From Expression \ref{ex.toe.1}, we want to compute the limit of the following path integral (indexed by parameter $\kappa$)
\beq
\frac{1}{Z^\kappa}\int_{\Omega^\kappa}F_S(\omega)V(\{\ul^v\}_{v=1}^{\un})(e) W(\vec{q}; \{\ol^u, \varrho_u, \rho_u\}_{u=1}^{\on})(A^{\otau}, \omega)  e^{i S(A^{\otau}, e, \omega)}\ D[A^{\otau},e,\omega]. \label{ex.toe.3} \eeq As $\kappa \rightarrow \infty$, its limit is equal to
\begin{align}
\hat{F}_SZ(\vec{q} ; &\{\ol^\mu, \varrho_\mu, \rho_\mu\}_{\mu \times \overline{T} \times V}, \uL ) \nonumber \\
=& -i\sqrt{4\pi}\ {\rm lk}(\uL, S) Z(\vec{q} ; \{\ol^\mu, \varrho_\mu, \rho_\mu\}_{\mu \in \overline{T} \times V}, \uL  ) \otimes \mathcal{E}. \label{e.toe.4}
\end{align}
Note that $Z(\vec{q} ; \{\ol^\mu, \varrho_\mu, \rho_\mu\}_{\mu \in \overline{T} \times V}, \uL )$ was defined in Equation (\ref{e.z.2}).
\end{thm}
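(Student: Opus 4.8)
The plan is to reduce the claim to the curvature computation already carried out in \cite{EH-Lim05} for the pure Einstein-Hilbert theory, by exploiting the factorization of the unified path integral established in the proof of Theorem \ref{t.u.1}. The key observation is that the spin curvature operator $F_S(\omega)$ of Equation (\ref{e.f.1}) depends only on the spin connection $\omega$ and not on the Chern-Simons connections $A^{\otau(i)}$. Hence, inserting $F_S(\omega)$ into Expression \ref{ex.toe.3} leaves the Chern-Simons factor $\bigotimes_{i=1}^3 Y_i^\kappa$ from Expression \ref{ex.toe.2} untouched, and for finite $\kappa$ the integral factorizes as
\beq
\widetilde{\Tr}\bigotimes_{i=1}^3 Y_i^\kappa \otimes \frac{1}{Z_{{\rm EH}}^\kappa}\int_{\omega, e} F_S(\omega) W(\omega)^{\otimes^3}V(\{\ul^v\}_{v=1}^{\un})(e) e^{iS_{{\rm EH}}(e,\omega)} D[e]D[\omega].
\nonumber
\eeq
First I would establish this factorization, which follows verbatim from the derivation of Expression \ref{ex.toe.2}, since $F_S(\omega)$ commutes with the $A^{\otau}$-integration.

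Next I would apply the quantized spin curvature result summarised in Equation (\ref{e.ehc.2}) to the second tensor factor. The content of that result is that, as $\kappa \to \infty$, the insertion of $F_S(\omega)$ reproduces the Einstein-Hilbert integral scaled by $-i\sqrt{4\pi}\ {\rm lk}(\uL, S)$ and tensored with the Lie algebra element $\mathcal{E} = (\mathcal{E}^+, -\mathcal{E}^-)$. The only adaptation needed is that the holonomy here appears as $W(\omega)^{\otimes^3}$ rather than a single copy; but $F_S$ is integrated over $S$ exactly once and the eigenvalue it produces is governed solely by the single topological quantity ${\rm lk}(\uL, S)$, so the computation of \cite{EH-Lim05} carries through, the three copies merely reflecting the simultaneous projection onto $\Sigma_1, \Sigma_2, \Sigma_3$.

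Finally I would reassemble the pieces and take $\kappa \to \infty$. The scalar $-i\sqrt{4\pi}\ {\rm lk}(\uL, S)$ pulls out of $\widetilde{\Tr}$ trivially, while the factor $\mathcal{E}$ occupies a tensor slot disjoint from the indices on which $\widetilde{\Tr}$ acts and therefore commutes past both $\widetilde{\Tr}$ and the Chern-Simons factor. What remains inside $\widetilde{\Tr}$ is precisely the expression whose limit was shown in Theorem \ref{t.u.1} to equal $Z(\vec{q}; \{\ol^\mu, \varrho_\mu, \rho_\mu\}_{\mu \in \overline{T} \times V}, \uL)$, which yields Equation (\ref{e.toe.4}). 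The hard part will be the bookkeeping in the second step: one must check that the single-copy curvature eigenvalue of \cite{EH-Lim05} still factors out cleanly against $W(\omega)^{\otimes^3}$, and that the block-diagonal form of the $W_u$ matrices — with the $\mathfrak{g}$-representation $\varrho_u$ and the $\mathfrak{su}(2)\times\mathfrak{su}(2)$-representation $\rho_u$ in disjoint blocks — guarantees that $F_S$ acts only on the $\rho_u$ block while leaving the Chern-Simons data intact.
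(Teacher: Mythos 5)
Your proposal follows essentially the same route as the paper's proof: factor the unified path integral into the Chern--Simons factors $\bigotimes_{i=1}^3 Y_i^\kappa$ and an Einstein--Hilbert factor carrying $F_S(\omega)$, apply the quantized-curvature result of \cite{EH-Lim05} (Equation (\ref{e.ehc.2})) to that factor, and reassemble via Theorem \ref{t.u.1} so that $\widetilde{\Tr}[\overline{I}\otimes\overline{J}\otimes\overline{K}]$ reproduces $Z(\vec{q};\{\ol^\mu,\varrho_\mu,\rho_\mu\}_{\mu\in\overline{T}\times V},\uL)$. One correction, however, to the bookkeeping you flag as the hard part: it is \emph{not} the case that $F_S$ ``acts only on the $\rho_u$ block while leaving the Chern--Simons data intact.'' The paper splits the Einstein--Hilbert factor along the two diagonal blocks of $W(\omega)^{\otimes^3}$: the term $\tilde{C}_\kappa$, where $F_S(\omega)$ is integrated against the identity block $\left[\bigotimes_{u=1}^{\on}I_{n_u}\right]^{\otimes^3}$, and the term $\tilde{D}_\kappa$, where it is integrated against $\bar{S}^{\otimes^3}$. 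Both limits acquire the \emph{same} scalar $-i\sqrt{4\pi}\,{\rm lk}(\uL,S)\otimes\mathcal{E}$; in particular the identity ($\varrho_u$/Chern--Simons) block is also multiplied by the eigenvalue, because that eigenvalue arises from the coupling of $F_S(\omega)$ to $V(\{\ul^v\}_{v=1}^{\un})(e)$ through the Einstein--Hilbert action, not from the matter holonomy sitting in either block. Had the identity block been ``left intact'' (i.e.\ multiplied by $1$), the limit would be block-diagonal of the form ${\rm diag}\left(I,\ \lambda\bigotimes_u\overline{Q}_u\right)$ rather than $\lambda\,\overline{K}$ with $\lambda=-i\sqrt{4\pi}\,{\rm lk}(\uL,S)$, and the eigenfunctional relation (\ref{e.toe.4}) would fail. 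So when executing your second step, compute both blocks as the paper does; the fact that they carry a common scalar is precisely what allows it to be pulled out of $\widetilde{\Tr}$ and past the Chern--Simons factors.
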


\begin{proof}
Recall
\begin{align*}
W(\omega)
&=\bigotimes_{u=1}^{\on}
\left(
  \begin{array}{cc}
    I_{n_u} &\ 0 \\
    0 &\ \bar{S}(q_0, \ol^u, \rho_u)(\omega) \\
  \end{array}
\right) .
\end{align*}
From Equation (\ref{e.ehc.2}),
\begin{align*}
\tilde{C}_\kappa &:= \frac{1}{Z_{{\rm EH}}^\kappa}
\int_{e \in L_e^\kappa, \omega \in L_\omega^\kappa}F_S(\omega)\otimes \left[\bigotimes_{u=1}^{\on}I_{n_u} \right]^{\otimes^3} V(\{\ul^v\}_{v=1}^{\un})(e)e^{iS_{{\rm EH}}(e,\omega)}\ D[e] D[\omega] \\
&\longrightarrow -i\sqrt{4\pi}\ {\rm lk}(\uL, S)\left[\bigotimes_{u=1}^{\on}I_{n_u}^{\otimes^3}\right] \otimes \mathcal{E},
\end{align*}
as $\kappa \rightarrow \infty$. Refer to Notation \ref{n.n.2}. And for $\bar{S} = \bigotimes_{u=1}^{\on}\bar{S}(q_0, \ol^u, \rho_u)(\omega)$,
\begin{align*}
\tilde{D}_\kappa
&:= \frac{1}{Z_{{\rm EH}}^\kappa}
\int_{e \in L_e^\kappa, \omega \in L_\omega^\kappa} F_S(\omega)\bar{S}^{\otimes^3} V(\{\ul^v\}_{v=1}^{\un})(e)e^{iS_{{\rm EH}}(e,\omega)}\ D[e] D[\omega]  \\
&\longrightarrow -i\sqrt{4\pi}\ {\rm lk}(\uL, S)  \otimes \mathcal{E} \otimes \bigotimes_{u=1}^{\on} \overline{Q}_u,
\end{align*}
as $\kappa \rightarrow \infty$. See Equations (\ref{e.n.1}) and (\ref{e.ehc.2}). Note that
\begin{align}
\overline{Q}_u &= \bigotimes_{k=1}^3 \bigotimes_{p \in \pd(\Sigma_k; \pi_0(\ol^u), \pi_0(\uL))}\exp\left[ -\pi i q_0\ \sigma(p)\rho_u(\mathcal{E})\right].\label{e.q.1}
\end{align}

Therefore,
\begin{align*}
\frac{1}{Z_{{\rm EH}}^\kappa}\int_{e \in L_e^\kappa, \omega \in L_\omega^\kappa} & F_S(\omega) \otimes W(\omega)^{\otimes^3}\ V(\{\ul^v\}_{v=1}^{\un})(e) e^{iS_{{\rm EH}}(e,\omega)}\ D[e] D[\omega] \\
=&
\left(
  \begin{array}{cc}
     \tilde{C}_\kappa  &\ 0 \\
    0 &\ \tilde{D}_\kappa \\
  \end{array}
\right)
\longrightarrow -i\sqrt{4\pi}\ {\rm lk}(\uL, S) \otimes \overline{K}\otimes \mathcal{E} ,
\end{align*}
as $\kappa \rightarrow \infty$. Note that $\overline{K}$ was defined in Equation (\ref{e.d.1}).

From Equation (\ref{e.w.6}) and Expression \ref{ex.toe.2},
\begin{align*}
\widetilde{\Tr}& \bigotimes_{i=1}^3  Y_i \otimes \ \frac{1}{Z_{{\rm EH}}}\int_{\omega, e} F_S(\omega) \otimes W(\omega)^{\otimes^3}V(\{\ul^v\}_{v=1}^{\un})(e) e^{iS_{{\rm EH}}(e,\omega)} D[e]D[\omega] \\
&= \widetilde{\Tr} \bigotimes_{i=1}^3  Y_i^\kappa  \otimes
\left(
  \begin{array}{cc}
     \tilde{C}_\kappa  &\ 0 \\
    0 &\ \tilde{D}_\kappa \\
  \end{array}
\right)\longrightarrow -i\sqrt{4\pi}\ {\rm lk}(\uL, S)\widetilde{\Tr}[\overline{I} \otimes \overline{J} \otimes \overline{K}] \otimes \mathcal{E} ,
\end{align*}
as $\kappa \rightarrow \infty$. Note that $\overline{I}, \overline{J}$ were defined in Equation (\ref{e.d.1}). By Theorem \ref{t.u.1}, we have that $\widetilde{\Tr}[\overline{I} \otimes \overline{J} \otimes  \overline{K}] = Z(\vec{q}; \{\ol^\mu, \varrho_\mu, \rho_\mu\}_{\mu \in \overline{T} \times V}, \uL )$.
\end{proof}

Linking is a topological concept, and the linking number between a non-intersecting closed orientable surface and a loop in $\bR^4$ is defined. See \cite{ Milnor1965}. The definition of a linking number between a hyperlink and a surface is given in \cite{EH-Lim06}, and can be shown to be consistent with the definition of a linking number, given in \cite{Horowitz:1989km}, differing by a factor 1/2.

The above Equation (\ref{e.toe.4}) involves the linking number between a geometric hyperlink and a closed orientable surface in $\bR^4$. Furthermore, we see that it is a scalar multiple of $Z(\vec{q}; \{\ol^\mu, \varrho_\mu, \rho_\mu\}_{\mu \in \overline{T} \times V}, \uL )\mathcal{E}$, hence the Wilson Loop observable for the time-like hyperlink $\chi(\oL, \uL)$, is an eigenfunctional of the quantized curvature operator. If we impose time-ordering between all pairs of component matter loop and component geometric loop, then this expression will include the linking number between component knots, Homfly-type knot invariants, and the linking number between a closed surface and a time-like hyperlink. This expression is invariant under the time-like, preserving time-ordering equivalence relation, as described in \cite{EH-Lim06}.

Can we quantize the curvature $dA + A \wedge A$ using a Chern-Simons action? The answer is yes. Suppose $S \subset \bR^3$ is an orientable compact surface, with or without boundary, and that $\pi_0(\oL)$ intersect $S$ at finitely many points. We can consider the following Chern-Simons path integral \beq \frac{1}{Z_{{\rm CS}}^\kappa}\int_{A \in K_{{\otau}(3)}^{\kappa}} \mathcal{F}_S(A)W(q; \{\pi_0(\ol^u), \varrho_u\}_{u=1}^{\on})(A) e^{iS_{{\rm CS}}(A)} D[A], \label{ex.a.5} \eeq
whereby the curvature operator
\begin{align}
\begin{split}
\mathcal{F}_S(A) :=& \frac{1}{2}\sum_{a=1}^{\bar{m}}\int_S \frac{\partial A_\alpha^{a,1}}{\partial x_3} \otimes dx_3 \wedge dx_1 \otimes E_a^\alpha + \frac{\partial A_\alpha^{a,2}}{\partial x_3} \otimes dx_3 \wedge dx_2 \otimes E_a^\alpha  \\
&+ \left[\frac{\partial A_\alpha^{a,2}}{\partial x_1} - \frac{\partial A_\alpha^{a,1}}{\partial x_2}\right]dx_1 \wedge dx_2\otimes E_a^\alpha  + A_\alpha^{a,1}A_\beta^{a,2}\otimes dx_1 \wedge dx_2 \otimes [E_a^\alpha, E_a^\beta],
\end{split}
\label{e.f.2}
\end{align}
is a $\mathfrak{g}$-valued 2-form.

After applying the Chern-Simons rules to the path integral Expression \ref{ex.a.5}, the limit as $\kappa \rightarrow \infty$, if it exists, will not yield the linking number between a link and a surface in $\bR^3$. In fact, the limit is either 0 or infinity, the latter due to an extra factor of $\sqrt \kappa$. Even for the simplest case of an unknot in $\bR^3$, which intersects a disc in the $x_2-x_3$ plane once, we see that the limit as $\kappa \rightarrow \infty$ for Expression \ref{ex.a.5} will yield 0, when we consider a semi-simple Lie group $G$. Finally, the above path integral Expression \ref{ex.a.5} may not be invariant under diffeomorphism.

\begin{rem}
A quick way to argue is to consider an abelian Chern-Simons theory. For a closed surface $S \subset \bR^3$, we see that $\int_S dA = 0$ using Stokes' Theorem. Thus, a Chern-Simons path integral involving the integral of curvature for a closed surface, will yield 0. This means that the quantized curvature of a closed surface will be zero for an abelian Chern-Simons quantum theory.
\end{rem}

\section{Area and volume operator}\label{s.av}

Let $S$ be an orientable compact surface inside the spatial subspace $\bR^3 \hookrightarrow \bR \times \bR^3$, disjoint from the matter hyperlink $\oL$. Because we can consider ambient isotopy of $S$ in $\bR^3$, we will assume that $S$ is inside $x_2-x_3$ plane. Furthermore, we insist that $\pi_0(\oL)$ intersects the surface $S$ at most finitely many points. Using the dynamical variables $\{B_\mu^a\}$ and the Minkowski metric $\eta^{ab}$, we see that the metric $g^{ab} \equiv B^a_\mu\eta^{\mu\gamma}B^b_\gamma$ and the corresponding
area is given by \beq {\rm Area\ of}\ S(e) := A_S(e) \equiv \int_S \sqrt{g^{22}g^{33} - (g^{23})^2} \ dA. \nonumber \eeq

By using the Chern-Simon rules in \cite{EH-Lim02}, we made sense of the following path integral expression (indexed by a parameter $\kappa$),
\beq \frac{1}{Z_{{\rm EH}}^\kappa}\int_{\omega \in L_\omega^\kappa,\ e \in L_e^\kappa}A_S(e) V(\{\ul^v\}_{v=1}^{\un})(e) W(q_0; \{\ol^u, \rho_u^\pm\}_{u=1}^{\on})(\omega)\ e^{i S_{{\rm EH}}(e, \omega)}\ D[e] D[\omega], \label{ex.eha.1} \eeq $V$ and $W$ as defined in Equation (\ref{e.l.7}) and $Z_{{\rm EH}}^\kappa$ is a normalization constant given by Equation (\ref{e.l.8}). Note that $\{\rho_u^\pm\}_{u=1}^{\on}$ is a representation for the hyperlink $\oL$. In \cite{EH-Lim03}, we took the limit of the path integral expression and quantized the area of $S$ into an operator $\hat{A}_S$,

In \cite{EH-Lim06}, we defined the piercing number $\nu_S(l)$ between a compact surface $S$ in $\bR^3 \equiv \{0\} \times \bR^3$, with or without boundary, and a loop $l$. This is a well-defined invariant, up to time-like isotopy, preserving time-ordering. It counts the number of times $\pi_0(l)$ intersects the surface $S$ in $\bR^3$, the intersection points termed as piercings. We always choose a representative of $l$ and $S$, up to the time-like and time-ordered equivalence relation defined in \cite{EH-Lim06}, such that the $\pi_0(l)$ and $S$ in $\bR^3$ have the minimum number of piercings, which gives us $\nu_S(l)$. This is an invariant under the said equivalence relation.

The main theorem in \cite{EH-Lim03} says that
\begin{align}
\lim_{\kappa \rightarrow \infty}\frac{1}{Z_{{\rm EH}}^\kappa}&\int_{\omega \in L_\omega^\kappa,\ e \in L_e^\kappa}V(\{\ul^v\}_{v=1}^{\un})(e) W(q_0; \{\ol^u, \rho_u^+\}_{u=1}^{\on})(\omega)A_S(e)\  e^{i S_{{\rm EH}}(e, \omega)}\ D[e] D[\omega] \nonumber \\
&= \frac{|q_0|\sqrt\pi}{2} \left[
\sum_{u=1}^{\on}\nu_{S}(\ol^u)  \sqrt{\xi_{\rho_u^+}}\right]Z^+(q_0; \{\ol^u, \rho_u\}_{u=1}^{\on}, \uL), \label{e.a.3a}
\end{align}
and
\begin{align}
\lim_{\kappa \rightarrow \infty}\frac{1}{Z_{{\rm EH}}^\kappa}&\int_{\omega \in L_\omega^\kappa,\ e \in L_e^\kappa}V(\{\ul^v\}_{v=1}^{\un})(e) W(q_0; \{\ol^u, \rho_u^-\}_{u=1}^{\on})(\omega)A_S(e)\  e^{i S_{{\rm EH}}(e, \omega)}\ D[e] D[\omega] \nonumber \\
&= i\frac{|q_0|\sqrt\pi}{2} \left[
\sum_{u=1}^{\on}\nu_{S}(\ol^u)\sqrt{\xi_{\rho_u^-}}\right]Z^-(q_0; \{\ol^u, \rho_u\}_{u=1}^{\on}, \uL). \label{e.a.3b}
\end{align}
See Definition \ref{d.z.1}.

\begin{rem}
Note that in defining these path integrals, it was necessary to partition the surface $S$. This limit was shown to be independent of any partition.
\end{rem}

\begin{thm}\label{t.u.4}
From Expression \ref{ex.toe.1}, we want to compute the limit of the following path integral (indexed by parameter $\kappa$)
\beq
\frac{1}{Z^\kappa}\int_{\Omega^\kappa}A_S(e)V(\{\ul^v\}_{v=1}^{\un})(e) W(\vec{q}; \{\ol^u, \varrho_u, \rho_u^+\}_{u=1}^{\on})(A^{\otau}, \omega)  e^{i S(A^{\otau}, e, \omega)}\ D[A^{\otau}, e, \omega]. \label{ex.as.1} \eeq Its limit, as $\kappa \rightarrow \infty$, allows us to define the area operator $\hat{A}_S$,
\begin{align}
\hat{A}_SZ(\vec{q};& \{\ol^{\mu}, \varrho_\mu, \rho_\mu^+\}_{\mu \in \overline{T}\times V}, \uL ) \nonumber \\
&:= \frac{|q_0|\sqrt\pi}{2}\left[
\sum_{u=1}^{\on}\nu_{S}(\ol^u)  \sqrt{\xi_{\rho_u^+}}\right]\ \prod_{u=1}^{\on}\prod_{i=1}^3\mathcal{W}^+(i,u)
,\label{e.a.1}
\end{align}
whereby \beq \mathcal{W}^\pm(i,u) :=
\Tr_{\rho_u^\pm}\ \prod_{p \in \pd(\Sigma_i; \pi_0(\ol^u), \pi_0(\uL))}\exp\left[ \mp\pi i q_0\ \sigma(p)\mathcal{E}^\pm\right] . \label{e.w.4} \eeq
Similarly, the limit of the following path integral (indexed by parameter $\kappa$)
\beq
\frac{1}{Z^\kappa}\int_{\Omega^\kappa}A_S(e)V(\{\ul^v\}_{v=1}^{\un})(e) W(\vec{q}; \{\ol^u, \varrho_u, \rho_u^-\}_{u=1}^{\on})(A^{\otau}, \omega)  e^{i S(A^{\otau}, e, \omega)}\ D[A^{\otau},e,\omega], \label{ex.as.2} \eeq is equal to
\begin{align}
\hat{A}_SZ(\vec{q};& \{\ol^\mu, \varrho_\mu, \rho_\mu^-\}_{\mu\in \overline{T} \times V}, \uL ) \nonumber \\
&:= i\frac{|q_0|\sqrt\pi}{2} \left[
\sum_{u=1}^{\on}\nu_{S}(\ol^u)\sqrt{\xi_{\rho_u^-}}\right]\ \prod_{u=1}^{\on}\prod_{i=1}^3\mathcal{W}^-(i,u).\label{e.a.2}
\end{align}
\end{thm}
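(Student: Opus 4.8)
The plan is to run the argument of Theorem~\ref{t.u.3} almost verbatim, the one structural difference being that the area operator $A_S(e)$ depends on the vierbein $e$ and is therefore absorbed entirely into the Einstein--Hilbert factor, leaving the Chern--Simons factors untouched. First I would put Expression~\ref{ex.as.1} into the factored form of Expression~\ref{ex.toe.2}, writing it as $\widetilde{\Tr}\,\bigotimes_{i=1}^3 Y_i^\kappa \otimes \tilde{A}_\kappa$, where the $Y_i^\kappa$ are exactly the Chern--Simons factors of Theorem~\ref{t.u.1} (unchanged, since $A_S(e)$ does not involve $A^{\otau}$) and
\[
\tilde{A}_\kappa := \frac{1}{Z_{{\rm EH}}^\kappa}\int_{\omega,e} A_S(e)\, W(\omega)^{\otimes^3}\, V(\{\ul^v\}_{v=1}^{\un})(e)\, e^{iS_{{\rm EH}}(e,\omega)}\, D[e]\,D[\omega].
\]
As $\kappa\to\infty$ the $Y_i^\kappa$ converge to the products of $\mathcal{R}(p)$ and $\mathcal{T}(p)$ matrices precisely as in Theorem~\ref{t.u.1}, so the whole problem reduces to identifying $\lim_{\kappa\to\infty}\tilde{A}_\kappa$.

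To do this I would use the block decomposition of $W(\omega)^{\otimes^3}$ from Notation~\ref{n.n.1}, splitting $\tilde{A}_\kappa$ into the upper-block piece $\tilde{C}_\kappa$ (with $\bigl[\bigotimes_u I_{n_u}\bigr]^{\otimes^3}$ replacing the holonomy) and the lower-block piece $\tilde{D}_\kappa$ (with $\bar{S}^{\otimes^3}$), exactly as in the proof of Theorem~\ref{t.u.3}. The decisive departure from the curvature case occurs here: because the Einstein--Hilbert area limit~\ref{e.a.3a} is proportional to $\sqrt{\xi_{\rho_u^+}}$, and the upper block carries the trivial $\mathfrak{su}(2)$ representation for which $\xi = j(j+1) = 0$, one gets $\tilde{C}_\kappa \to 0$ --- in sharp contrast to the curvature operator, whose eigenvalue $-i\sqrt{4\pi}\,{\rm lk}(\uL,S)\,\mathcal{E}$ survives even on the trivial block. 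For the lower block I would invoke the pre-trace (operator) form of~\ref{e.a.3a}, as used in the proof of Theorem~\ref{t.u.3}: the area operator brings down the single scalar $\tfrac{|q_0|\sqrt\pi}{2}\sum_u \nu_S(\ol^u)\sqrt{\xi_{\rho_u^+}}$ while the cube $\bar{S}^{\otimes^3}$ supplies the three-plane holonomy $\bigotimes_u \overline{Q}_u^+$, exactly as $\overline{Q}_u$ arose in Equation~\ref{e.q.1}.

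Reassembling, the block-diagonal limit of $\tilde{A}_\kappa$ equals $\tfrac{|q_0|\sqrt\pi}{2}\bigl[\sum_u \nu_S(\ol^u)\sqrt{\xi_{\rho_u^+}}\bigr]$ times the matrix whose upper block is zero and whose lower block is $\bigotimes_u \overline{Q}_u^+$. When this is tensored against $\bigotimes_i Y_i^\kappa$ and fed through $\widetilde{\Tr}$, the vanishing of the upper block forces the upper (Chern--Simons) block of the combined holonomy to vanish, so the entire $R$- and $T$-matrix contribution drops out and only the trace of the lower (Einstein--Hilbert) block remains. That trace factorizes as $\prod_u\prod_i \Tr_{\rho_u^+}\prod_{p\in\pd(\Sigma_i;\pi_0(\ol^u),\pi_0(\uL))}\exp[-\pi iq_0\,\sigma(p)\,\mathcal{E}^+] = \prod_u\prod_i \mathcal{W}^+(i,u)$, which is exactly~\ref{e.a.1}. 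The case~\ref{e.a.2} is identical, invoking~\ref{e.a.3b} in place of~\ref{e.a.3a}, which supplies the extra factor $i$ and replaces $\mathcal{E}^+,\xi_{\rho_u^+}$ by $\mathcal{E}^-,\xi_{\rho_u^-}$.

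The step I expect to be the main obstacle is twofold: promoting the scalar limit~\ref{e.a.3a} to its operator form on the tensor cube $\bar{S}^{\otimes^3}$ (so that a single area insertion produces one scalar but the three-plane holonomy), and then justifying the interchange of the $\kappa\to\infty$ limit with both the block decomposition and the linear functional $\widetilde{\Tr}$. The delicate point is verifying that the area scalar genuinely multiplies only the lower ($\rho$) block and that the Chern--Simons $R,T$-matrices are truly annihilated rather than merely rescaled. This is precisely the mechanism behind the failure of the Wilson Loop observable to be an eigenfunctional of $\hat{A}_S$: unless $\varrho$ is trivial, $\hat{A}_S$ projects onto the $\mathfrak{su}(2)\times\mathfrak{su}(2)$ sector and discards the Chern--Simons sector entirely, so the output is no longer proportional to the input $Z(\vec{q};\{\ol^\mu,\varrho_\mu,\rho_\mu^+\}_{\mu\in\overline{T}\times V},\uL)$.
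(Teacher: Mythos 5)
Your proposal is correct and follows essentially the same route as the paper: the paper likewise factors the integral into the Chern--Simons pieces $Y_i^\kappa$ tensored with an Einstein--Hilbert block matrix $\bigl(\begin{smallmatrix} C_\kappa & 0 \\ 0 & D_\kappa^+ \end{smallmatrix}\bigr)$, shows $C_\kappa \to 0$ via Equation (\ref{e.a.3a}) while $D_\kappa^+ \to \frac{|q_0|\sqrt\pi}{2}\bigl[\sum_u \nu_S(\ol^u)\sqrt{\xi_{\rho_u^+}}\bigr] \otimes \bigotimes_u \overline{Q}_u$, and then applies $\widetilde{\Tr}$ so that the vanishing upper block kills the $R$- and $T$-matrix contributions, leaving $\prod_u\prod_i \mathcal{W}^+(i,u)$. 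Your identification of the mechanism (trivial representation giving $\xi=0$ in the upper block, hence the non-eigenfunctional conclusion) is exactly the argument the paper uses, with the $\rho^-$ case handled by symmetry via Equation (\ref{e.a.3b}).
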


\begin{proof}
Write
\begin{align*}
\tilde{S}_u^+(\omega)
&:= \exp\left[ \frac{q_0}{3}\int_{\ol^u} A^i_{\alpha\beta} \otimes dx_i\otimes \rho_u^+(\hat{E}^{\alpha\beta})\right].
\end{align*}
From Equation (\ref{e.a.3a}),
\begin{align*}
C_\kappa := \frac{1}{Z_{{\rm EH}}^\kappa}&
\int_{e \in L_e^\kappa, \omega \in L_\omega^\kappa} \bigotimes_{u=1}^{\on}I_{n_u}^{\otimes^3}\otimes A_S(e) V(\{\ul^v\}_{v=1}^{\un})(e)e^{i S_{{\rm EH}}(e,\omega)}\ D[e] D[\omega] \\
\longrightarrow& 0,
\end{align*}
as $\kappa \rightarrow \infty$. And for $V(e) \equiv V(\{\ul^v\}_{v=1}^{\un})(e)$,
we have
\begin{align*}
D_\kappa^{+} :=\frac{1}{Z_{{\rm EH}}^\kappa}
& \int_{e \in L_e^\kappa, \omega \in L_\omega^\kappa} A_S(e) \bigotimes_{u=1}^{\on}\tilde{S}_u^+(\omega)^{\otimes^3} V(e)e^{i S_{{\rm EH}}(e,\omega)}\ D[e] D[\omega] \\
\longrightarrow& \frac{|q_0|\sqrt\pi}{2} \left[\sum_{u=1}^{\on}
\nu_{S}(\ol^u)  \sqrt{\xi_{\rho_u^+}}\right]\otimes \bigotimes_{u=1}^{\on}  \overline{Q}_u ,
\end{align*}
as $\kappa \rightarrow \infty$. See Equations (\ref{e.n.1}) and (\ref{e.a.3a}). Note that $\overline{Q}_u$ was defined in Equation (\ref{e.q.1}).

From Expression \ref{ex.toe.2},
\begin{align*}
\widetilde{\Tr}&\ \bigotimes_{i=1}^3  Y_i \otimes \ \frac{1}{Z_{{\rm EH}}}\int_{\omega, e} A_S(e) W^+(\omega)^{\otimes^3}V(e) e^{iS_{{\rm EH}}(e,\omega)} D[e]D[\omega] \\
&= \widetilde{\Tr}\ \frac{1}{Z_{{\rm EH}}}\int_{\omega, e} A_S(e) W^+(\omega)^{\otimes^3}V(e) e^{iS_{{\rm EH}}(e,\omega)} D[e]D[\omega]\ \otimes \bigotimes_{i=1}^3  Y_i \\
&= \widetilde{\Tr}\
\left(
  \begin{array}{cc}
     C_\kappa  &\ 0 \\
    0 &\ D_\kappa^{+} \\
  \end{array}
\right) \otimes
\bigotimes_{i=1}^3  Y_i^\kappa,
\end{align*}
whereby index $\kappa$ is omitted from the expression, and \beq W^+(\omega) := \bigotimes_{u=1}^{\on}
\left(
  \begin{array}{cc}
    I_{n_u} &\ 0 \\
    0 &\ \tilde{S}_u^+(\omega) \\
  \end{array}
\right). \nonumber \eeq
The limit as $\kappa \rightarrow \infty$, is given by
\begin{align*}
\frac{|q_0|\sqrt\pi}{2}& \left[
\sum_{u=1}^{\on}\nu_{S}(\ol^u)  \sqrt{\xi_{\rho_u^+}}\right]\widetilde{\Tr}\
\bigotimes_{u=1}^{\on}\overline{Q}_u \\
&= \frac{|q_0|\sqrt\pi}{2} \left[
\sum_{u=1}^{\on}\nu_{S}(\ol^u)  \sqrt{\xi_{\rho_u^+}}\right]\ \prod_{u=1}^{\on}\prod_{i=1}^3\mathcal{W}^+(i,u)
.
\end{align*} This proves Equation (\ref{e.a.1}). The proof for Equation (\ref{e.a.2}) is similar, hence omitted.
\end{proof}

Fix a compact solid region $R \subset \bR^3$, possibly disconnected with finite number of components, disjoint from the matter hyperlink $\oL$. Its boundary is a closed (compact without boundary) surface. Using the dynamical variables $\{B_\mu^a\}$ and the Minkowski metric $\eta^{ab}$, we see that the metric $g^{ab} \equiv B^a_\mu\eta^{\mu\gamma}B^b_\gamma$ and the corresponding volume $V_R$ is given by \beq V_R(e) := \int_R \sqrt{\epsilon_{ijk}\epsilon_{\bar{i}\bar{j}\bar{k}}g^{i\bar{i}}g^{j\bar{j}}g^{k\bar{k}}}. \nonumber \eeq

By using the Chern-Simon rules, in \cite{EH-Lim02}, we made sense of the following path integral expression (indexed by a parameter $\kappa$),\beq \frac{1}{Z_{{\rm EH}}^\kappa}\int_{\omega \in L_\omega^{\kappa},\ e \in L_e^\kappa}V_R(e)V(\{\ul^v\}_{v=1}^{\un})(e) W(q_0; \{\ol^u, \rho_u\}_{u=1}^{\on})(\omega)\  e^{i S_{{\rm EH}}(e, \omega)}\ D[e] D[\omega], \label{ex.v.11} \eeq whereby $V$ and $W$ were defined in Equation (\ref{e.l.7}) and $Z_{{\rm EH}}^\kappa$ is a normalization constant given by Equation (\ref{e.l.8}). In \cite{EH-Lim04}, we took the limit of the path integral expression and quantized the volume of $R$ into an operator $\hat{V}_R$ using the preceding path integral expression.

In \cite{EH-Lim06}, we defined the confinement number $\nu_R(l)$, which counts the number of nodes in the interior of $R$. The set of nodes on the projected loop $\pi_0(l)$, are in 1-1 correspondence with the set of half twists on the link diagram of a framed knot $\pi_0(l)$, as described in Item \ref{i.f.1} of Definition \ref{d.c.1}. Under the time-like equivalence relation, as defined in \cite{EH-Lim06}, $\nu_R(l)$ is an invariant.

From the main theorem in \cite{EH-Lim04}, we have the following result, which says that
\begin{align}
\lim_{\kappa \rightarrow \infty}\frac{1}{Z_{{\rm EH}}^\kappa}\int_{\omega \in L_\omega^{\kappa},\ e \in L_e^\kappa}&V_R(e)V(\{\ul^v\}_{v=1}^{\un})(e) W(q_0; \{\ol^u, \rho_u^\pm\}_{u=1}^{\on})(\omega)\  e^{i S_{{\rm EH}}(e, \omega)}\ D[e] D[\omega] \nonumber \\
&= \frac{q_0^2\pi^{3/2}}{2}\left[\sum_{u=1}^{\on} \nu_R(\ol^u)
\xi_{\rho_u^\pm} \right]Z^\pm(q_0; \{\ol^u, \rho_u\}_{u=1}^{\on}, \uL). \label{e.v.1}
\end{align}

Note that to define this path integral, it was necessary to partition the solid region $R$. The limit was shown to be independent of any partition. We can now state a similar theorem for the volume operator. The proof is similar to the case for the area operator, hence omitted.

\begin{thm}\label{t.u.2}
From Expression \ref{ex.toe.1}, we want to compute the limit of the following path integral (indexed by the parameter $\kappa$)
\beq
\frac{1}{Z^\kappa}\int_{ \Omega^\kappa}V_R(e)V(\{\ul^v\}_{v=1}^{\un})(e) W(\vec{q}; \{\ol^\mu, \varrho_u, \rho_u^\pm\}_{u=1}^{\on})(A^{\otau}, \omega)  e^{i S(A^{\otau}, e, \omega)}\ D[A^{\otau}, e, \omega]. \label{ex.vs.1} \eeq Its limit, as $\kappa \rightarrow \infty$, allows us to define the volume operator $\hat{V}_R$,
\begin{align*}
\hat{V}_RZ(\vec{q};& \{\ol^\mu, \varrho_\mu, \rho_\mu^\pm\}_{\mu\in \overline{T} \times V}, \uL ) \nonumber \\
&:= \frac{q_0^2\pi^{3/2}}{2}\left[\sum_{u=1}^{\on} \nu_R(\ol^u)
\xi_{\rho_u^\pm} \right]\ \prod_{u=1}^{\on}\prod_{i=1}^3\mathcal{W}^\pm(i,u).
\end{align*}
See Equation (\ref{e.w.4}) for the definition of $\mathcal{W}^\pm(i,u)$.
\end{thm}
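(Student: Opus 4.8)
The plan is to transcribe the proof of Theorem~\ref{t.u.4} almost verbatim, replacing the area functional $A_S(e)$ by the volume functional $V_R(e)$ and the area eigenvalue relations~(\ref{e.a.3a})--(\ref{e.a.3b}) by the single volume eigenvalue relation~(\ref{e.v.1}) from \cite{EH-Lim04}. The structural point that makes this work is that $V_R(e)$ depends only on the coframe $e$, hence it commutes past and leaves untouched the three Chern--Simons integrals $Y_i$ appearing in the factorisation~\ref{ex.toe.2}; only the Einstein--Hilbert sector feels the volume insertion. Thus the whole computation reduces to evaluating the Einstein--Hilbert path integral with $V_R(e)$ inserted and then recombining via $\widetilde{\Tr}$ exactly as in Theorem~\ref{t.u.1}.

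Concretely, I would first set $\tilde{S}_u^\pm(\omega) := \exp[\frac{q_0}{3}\int_{\ol^u}A^i_{\alpha\beta}\otimes dx_i\otimes\rho_u^\pm(\hat{E}^{\alpha\beta})]$ and write the holonomy operator in its block-diagonal form $W^\pm(\omega)=\bigotimes_{u=1}^{\on}\left(\begin{smallmatrix} I_{n_u} & 0 \\ 0 & \tilde{S}_u^\pm(\omega)\end{smallmatrix}\right)$, as in the area proof. The upper-left block, in which every matter holonomy is the identity $I_{n_u}$, carries the trivial $\mathfrak{su}(2)$ representation, so Equation~(\ref{e.v.1}) with $\xi=0$ forces it to vanish in the limit; call this block $C_\kappa\to 0$. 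For the lower-right block, Equation~(\ref{e.v.1}) gives
\begin{align*}
\frac{1}{Z_{{\rm EH}}^\kappa}\int_{e,\omega}V_R(e)\bigotimes_{u=1}^{\on}\tilde{S}_u^\pm(\omega)^{\otimes^3}V(\{\ul^v\}_{v=1}^{\un})(e)e^{iS_{{\rm EH}}(e,\omega)}D[e]D[\omega]\longrightarrow \frac{q_0^2\pi^{3/2}}{2}\left[\sum_{u=1}^{\on}\nu_R(\ol^u)\xi_{\rho_u^\pm}\right]\otimes\bigotimes_{u=1}^{\on}\overline{Q}_u,
\end{align*}
with $\overline{Q}_u$ as in Equation~(\ref{e.q.1}); call this $D_\kappa^\pm$. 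Assembling the two blocks into $\left(\begin{smallmatrix} C_\kappa & 0 \\ 0 & D_\kappa^\pm\end{smallmatrix}\right)$, tensoring with $\bigotimes_{i=1}^3 Y_i^\kappa$ as prescribed by Expression~\ref{ex.toe.2}, and applying $\widetilde{\Tr}$ of Definition~\ref{d.tr.1} sets up the limit.

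Finally I would send $\kappa\to\infty$ and invoke Theorem~\ref{t.u.1}: the $Y_i$ converge to the factors $\overline{I},\overline{J}$ of Equation~(\ref{e.d.1}), the surviving $\overline{Q}_u$ assemble into $\overline{K}$, and the $R$- and $T$-matrix recombination of \cite{CS-Lim02} under $\widetilde{\Tr}$ reorganises everything into $\prod_{u=1}^{\on}\prod_{i=1}^3\mathcal{W}^\pm(i,u)$, where $\mathcal{W}^\pm(i,u)$ is defined in Equation~(\ref{e.w.4}); multiplying by the scalar $\frac{q_0^2\pi^{3/2}}{2}[\sum_u\nu_R(\ol^u)\xi_{\rho_u^\pm}]$ yields the asserted formula for $\hat{V}_R$. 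I expect no genuinely new obstacle: every analytic ingredient---the existence of the $\kappa\to\infty$ Einstein--Hilbert limit with a volume insertion, and the Chern--Simons state-sum recombination---is already proved in \cite{EH-Lim04} and \cite{CS-Lim02} and is merely transplanted. The one point demanding care is purely combinatorial bookkeeping: verifying that $V_R(e)$ really does decouple from the Chern--Simons sector and that $\widetilde{\Tr}$ time-orders the matrices correctly across all $3\on$ components, so that the block decomposition and the tensor reorganisation reproduce, step for step, the pattern already validated for the area operator.
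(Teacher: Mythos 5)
Your proposal is correct and matches the paper's intent exactly: the paper omits this proof altogether, stating only that it is ``similar to the case for the area operator, hence omitted,'' and your step-for-step transcription of the proof of Theorem \ref{t.u.4} with $A_S(e)$ replaced by $V_R(e)$, and Equations (\ref{e.a.3a})--(\ref{e.a.3b}) replaced by Equation (\ref{e.v.1}), is precisely that intended argument (block decomposition of $W^\pm(\omega)$, vanishing of the trivial-representation block $C_\kappa$, limit of $D_\kappa^\pm$ onto the scalar times $\bigotimes_{u=1}^{\on}\overline{Q}_u$, then assembly under $\widetilde{\Tr}$). One caution on your final paragraph's wording: the $R$- and $T$-matrices do not ``recombine'' into the answer --- because $C_\kappa \to 0$ annihilates the Chern--Simons block, they contribute only through their identity lower blocks under $\widetilde{\Tr}$, which is exactly why the limit is $\frac{q_0^2\pi^{3/2}}{2}\left[\sum_{u=1}^{\on}\nu_R(\ol^u)\xi_{\rho_u^\pm}\right]\prod_{u=1}^{\on}\prod_{i=1}^3\mathcal{W}^\pm(i,u)$ rather than a scalar multiple of the full state sum $Z(\vec{q};\{\ol^\mu,\varrho_\mu,\rho_\mu\}_{\mu\in\overline{T}\times V},\uL)$, i.e.\ why the Wilson Loop observable fails to be an eigenfunctional of $\hat{V}_R$.
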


Clearly, we see that the Wilson Loop observable $Z(\vec{q}; \{\ol^\mu, \varrho_\mu, \rho_\mu^\pm\}_{\mu\in \overline{T} \times V}, \uL )$, is not an eigenfunctional for both the area and volume operators. In order for the Wilson Loop observable to be an eigenfunctional, it is necessary for $\varrho_u$ to be the trivial representation. In other words, $\varrho_u: \mathfrak{g} \rightarrow 0$.

\begin{rem}
Note that from Equations (\ref{e.n.1}) and (\ref{e.a.4}), even after imposing time-ordering between each pair of matter of geometric loop, \beq
Z^\pm(q_0; \{\ol^u, \rho_u\}_{u=1}^{\on}, \uL) \neq \prod_{u=1}^{\on}\prod_{i=1}^3\mathcal{W}^\pm(i,u) . \nonumber \eeq  Indeed, we will write
\begin{align}
Z(q_0; \{\ol^\mu, 0, \rho_\mu^\pm\}_{\mu\in \overline{T} \times V}, \uL ) &:= \prod_{u=1}^{\on}\prod_{i=1}^3\mathcal{W}^\pm(i,u)
\equiv [Z^\pm(q_0/3; \{\ol^u, \rho_u\}_{u=1}^{\on}, \uL)]^3  \label{e.a.7}
\end{align}
in future. Refer to Equation (\ref{e.w.4}).
\end{rem}

Explicitly, from Expression \ref{ex.toe.1}, we have
\begin{align}
\frac{1}{Z^\kappa}&\int_{(A^{\otau}, \omega, e)\in \Omega^\kappa}V(\{\ul^v\}_{v=1}^{\un})(e) W(\vec{q}; \{\ol^\mu, 0, \rho_\mu\}_{u=1}^{\on})(A^{\otau}, \omega)  e^{i S(A^{\otau}, e, \omega)}\ D[A^{\otau}, e, \omega] \nonumber\\
=& \frac{1}{Z^\kappa_{{\rm EH}}}\int_{\omega , e }V(\{\ul^v\}_{v=1}^{\un})(e)W(q_0/3; \{\ol^u, \rho_u^\pm\}_{u=1}^{\on})(\omega)^3   e^{i S_{{\rm EH}}( e, \omega)}\ D[e]D[\omega] \nonumber \\
&\ \times\frac{\int_{A^{\otau}} e^{iS_{{\rm CS}}(A^{\otau})} D[A^{\otau}]}{Z^\kappa_{{\rm CS}}} \nonumber \\
\longrightarrow& [Z^\pm(q_0/3; \{\ol^u, \rho_u\}_{u=1}^{\on}, \uL)]^3 = Z(q_0; \{\ol^\mu, 0, \rho_\mu^\pm\}_{\mu\in \overline{T} \times V}, \uL ) , \label{e.a.5}
\end{align}
as we take $\kappa \rightarrow \infty$.

\section{Final comments}\label{s.fr}

The Wilson Loop observable, given by Equation (\ref{e.w.3}), defined using a Chern-Simons path integral, is complex-valued. In contrast, the Wilson Loop observable given by Equation (\ref{e.weh.1}), defined using an Einstein-Hilbert path integral, is real-valued. In both path integrals, the common denominator is the colored matter hyperlink $\oL$, i.e. each matter loop component carries a representation for the Lie algebra $\mathfrak{g} \times [\mathfrak{su}(2) \times \mathfrak{su}(2)]$ of ${\rm G} \times [{\rm SU(2)} \times  {\rm SU(2)}]$.

In General Relativity, solving for the Riemannian metric in Einstein's equations will give us the geometry on the ambient space $\bR^4$. The components $\{A_{\alpha\beta}^a\}$ of the $\mathfrak{su}(2) \times \mathfrak{su}(2)$-valued connection $\omega$ in 4-dimensional space, can be interpreted as a Riemannian connection on a tangent bundle in $\bR^4$. It is paramount then that we can give a geometrical meaning to the $\mathfrak{g}_a$-valued connection.

In general, the $\mathfrak{g}_a$-valued connection $A_a$ has no geometric meaning in $\bR^3$. For example, the curvature of an electromagnetic 4-potential, gives us the electromagnetic field tensor, but it has no geometric meaning.\footnote{In \cite{Lindgren_2021}, the authors tried to explain how one can derive electromagnetic theory purely from geometric considerations, using a space-time metric.} This might explain why a quantized curvature using a Chern-Simons path integral, will not yield topological invariants. We also explained in the last paragraph in Section \ref{s.tqt}, quantizing $\mathfrak{g}$-valued curvature using a Chern-Simons path integral, will yield 0 or infinity.

In \cite{EH-Lim02}, we showed that in defining the Einstein-Hilbert path integrals for the area and volume operators, it was necessary to partition the surface and the solid region respectively. The limit of each respective expressions, is independent of this partition. See \cite{EH-Lim03, EH-Lim04}. This gives us the notion of a quanta of area or volume. As such, the area and volume operators should be considered as local operators, unlike the curvature operator, which depends on the global topology of the surface and the hyperlink. Refer to \cite{EH-Lim07} for details.

Hence, LQG can be considered as both a local and global theory. In contrast, the quantum Chern-Simons theory is not a local theory, but a global theory. The nodes on the framed link $\pi_0(\oL)$, which contribute to the `quantized volume' of a solid, viewed as a 3-submanifold with boundary, is 0-dimensional. Therefore, LQG can be called a `0-1-2-3-4' theory, according to \cite{Freed}, as it involves submanifolds of each of the dimension.

As explained earlier on, the representation $\varrho_u$ has to be trivial for each component matter loop, so that the Wilson Loop functional is an eigenfunctional for the area and volume operator. Thus, we no longer have the $R$-matrices given in Notation \ref{n.q.3} and the Wilson Loop observable reduces down to Equation (\ref{e.a.5}). Only the hyperlinking number between the matter hyperlink (representing the particles) and the geometric hyperlink (representing the gravitons), will give us a non-trivial Wilson Loop observable. This will happen at short distances (comparable to Planck distance), when we are able to consider a quanta of area or volume.

As explained in \cite{GUT}, a set of particles is represented as a basis in a vector space $V$, whereby the elements in the Lie algebra, represented as an endomorphism of the vector space $V$, act on it irreducibly. The force carrying bosons are described as representations of generators in this Lie algebra $\mathfrak{g}$.
When the representation of the Lie algebra is trivial, it means the particles, each represented by a component time-like loop in a matter hyperlink, transform trivially under the gauge group ${\rm G}$, and are no longer distinguishable by the fundamental forces in the Standard Model. The gauge groups ${\rm U}(1)$, ${\rm SU}(2)$ and ${\rm SU}(3)$ will no longer act on the particles. We will boldly conclude that at short distances when gravity is dominant, the remaining three fundamental forces are no longer observable.

\end{document}